\newtheorem{prop}{Proposition}
\newtheorem{lemma}{Lemma}
\begin{document}

\title{Implementing Fairness Constraints in Markets Using Taxes and Subsidies}

\author{%
  Alexander Peysakhovich \\
  Meta AI Research
   \And
  Christian Kroer \\
   Meta Core Data Science \\
   Columbia University
   \And
  Nicolas Usunier \\
   Meta AI Research
}

\maketitle

\begin{abstract}
Fisher markets are those where buyers with budgets compete for scarce items, a natural model for many real world markets including online advertising. A market equilibrium is a set of prices and allocations of items such that supply meets demand. We show how market designers can use taxes or subsidies in Fisher markets to ensure that market equilibrium outcomes fall within certain constraints. We show how these taxes and subsidies can be computed even in an online setting where the market designer does not have access to private valuations. We adapt various types of fairness constraints proposed in existing literature to the market case and show who benefits and who loses from these constraints, as well as the extent to which properties of markets including Pareto optimality, envy-freeness, and incentive compatibility are preserved. We find that some prior discussed constraints have few guarantees in terms of who is made better or worse off by their imposition.
\end{abstract}

\section{Introduction}
A market solves the economic problem of ``who gets what and why'' \citep{roth2015gets}. Unfortunately, there is no guarantee that market outcomes will always align with other objectives of the market designer such as business needs, legal constraints, or notions of fairness. Recent work proposes the addition of `fairness constraints' into various allocation mechanisms \citep{celis2019toward,ilvento2020multi,chawla2022individual,celli2022parity,balseiro2021regularized}. We study how market designers can use price interventions to ensure that market outcomes satisfy arbitrary linear constraints. We consider prior proposed interventions and study them in a market setting, paying particular attention to both individual and aggregate level consequences. Importantly, we show that second-order effects in markets can mean that well intended interventions do not achieve their goals.

We focus on Fisher markets where budget constrained buyers compete for items that are in limited supply \citep{eisenberg1959consensus}. Given prices for items, demand is the result of buyers maximizing their utilities. An equilibrium in Fisher markets are prices and allocations such that supply equals demand.

The Fisher model allows us to abstract away from the details of how a market works and study equilibria as properties of demand and supply directly. For example, even though individual impressions in many real world advertising markets are allocated via paced auctions, the \textit{aggregate} outcome across all advertisers and impressions is that prices make supply of ad slots equal to advertiser demand for ad slots -- in other words, a Fisher market equilibrium \citep{conitzer2019pacing,conitzer2022multiplicative}. 

From the perspectives of buyers and market designers, Fisher equilibria are well understood. The allocations are Pareto-optimal and envy-free up to budgets (no buyer strictly prefers another buyer's bundle to their own bundle when budgets are equal) \citep{varian1974efficiency,budish2012multi}. When markets are `large' buyers have no incentives to lie about their valuations of items \citep{azevedo2018strategy,peysakhovich2019fair}. 

However, equilibrium-based allocations can have undesirable distributive properties from the point of view of group-level fairness. For example, recent work argues advertising markets may result in outcomes which may be `unfair' in various senses~\citep{ali2019discrimination,zhang2021measuring,imana2021auditing}. We ask how market designers can `adjust' market outcomes to fix these undesirable distributional properties. 
 
A common tool for market designers is price intervention \citep{pigou1924economics} -- taxing or subsidizing some market transactions. Our main technical contribution is to give designers a way to target these in Fisher markets. Given a family of linear constraints, we show to construct a set of price interventions -- taxes and subsidies for some purchases -- such that there are allocations which satisfy the constraints and are also market equilibria. We show how to do this both in a full information one-shot setting, as well an online, decentralized setting using an algorithm we call online price intervention calculation (OPIC).

Many constraint families proposed in the literature can be written as linear constraints. For each of these proposed constraint families we ask: how are market outcome properties affected by the imposition of these constraints? Does imposition of these constraints always satisfy certain goals? Who wins and who loses when these interventions are made? The results are not always intuitive and, because second-order effects abound in markets, do not always achieve the stated goals of the constraints.

Motivated by work on ad delivery \cite{ali2019discrimination}, we consider buyer parity constraints where items are split into groups $A$ and $B$ and each buyer in a set is required to have parity in allocation with respect to these groups \citep{ali2019discrimination,celis2019toward,ilvento2020multi,chawla2022individual,celli2022parity}. This constraint guarantees parity in exposure between the $A$ and $B$ groups, however it does not guarantee Pareto optimality even when item group welfare is taken into account. In addition, we show that this intervention can lead to the previously disadvantaged group receiving \textit{less} aggregate exposure than before.

Motivated by the work on job search of \cite{geyik2019fairness} we also consider the `transpose' of the above constraint - per item parity constraints. Buyers have group labels $A$ and $B$ and some set of items are each required to have equal exposure in the buyer groups. Here the goal is that items (e.g. recruiter impressions) have parity in exposure to buyers (e.g. resumes of two groups of individuals). Here, again, we lack Pareto optimality and the intervention can reduce the utility of a previously disadvantaged buyer group.

We additionally consider a `floor' constraint where we require that a subset of buyers have a minimum exposure to a subset of items. We find that this intervention gives outcomes that are Pareto optimal if item exposure utility is taken into account but not otherwise. Satisfying this constraint in market outcomes is equivalent to the market designer subsidizing constrained buyers' purchases of the protected items, and though this may sound uniformly welfare improving, we show that buyers in the constrained group can have their welfare reduced by the intervention.

We show that each of the above constraint families maintains the incentive properties of Fisher markets as long as the group of buyers on whom the constraint applies grows large with the market.

Finally, many of our counter-examples are `worst case' scenarios. We consider random markets with constraints and ask what an `average case' outcome might look like.

Overall, our contribution can be split into two parts: the first is how a market designer \textbf{could} implement general linear constraints. The second is a focus whether they \textbf{should} implement a particular sets of constraints. The main takeaway for practitioners is that often there are no guarantees that second-order effects from constraint imposition may not create more problems than the constraint solves. We are not arguing for \textbf{no} interventions, but rather that interventions should be made carefully and thoughtfully.

\section{Related Work}\label{sec:related}
There is large interest in centralized market allocation as a solution to the multi-unit assignment problem \citep{budish2012multi}. Here individuals report their valuations for items, a center computes an (approximate) equal-budget equilibrium allocation, and returns this allocation to the agents. This mechanism is known as competitive equilibrium from equal incomes (CEEI, \cite{varian1974efficiency}) and is used in the allocation of courses to university students though with more complex utility functions and computations than our Fisher case \citep{budish2013designing,budish2016course}. Recent work studies how general constraints can be added to CEEI allocations \citep{echenique2021constrained}. Our work adds to this literature by showing properties of specific constraints proposed in discussions of demographic fairness in allocation. We do not seek to prove general results about all possible constraint families. Instead, we focus on the more restrictive case of linear constraints in Fisher markets which allows us to derive stronger results--i.e. the convex program to determine optimal taxes/subsidies--than in general unit demand markets studied by \cite{echenique2021constrained}.

Market equilibria in Fisher markets and their relationship to convex programming are well studied \citep{eisenberg1959consensus,shmyrev2009algorithm,cole2017convex,kroer2019computing,cole2017convex,cole2018approximating,caragiannis2016unreasonable,murray2019robust,gao2020first}. Our work complements this existing work as we show that the convex program equivalence allows us to easily construct market interventions in the forms of taxes and subsidies. In addition, we use this equivalence to construct a provably convergent online algorithm.

A recent literature has begun to study differential outcomes in online systems \citep{ali2019discrimination,geyik2019fairness,zhang2021measuring,imana2021auditing} across categories of users (e.g. job ads across gender). Though a highly studied cause of these differential outcomes are biases in machine learning systems, there are also market forces that can cause such issues. For example, \citet{lambrecht2018algorithmic} studies STEM job ads competing in a real ad market and shows that ``younger women are a prized demographic and are more expensive to show ads to. An algorithm that simply optimizes cost-effectiveness in ad delivery will deliver ads that were intended to be gender neutral in an apparently discriminatory way, because of crowding out.'' Our work further highlights the importance of understanding market dynamics both in terms of how they cause disparate allocations as well as how various interventions behave.




A recent literature has looked at implementing various notions of fairness in single auctions \citep{celis2019toward,ilvento2020multi,chawla2022individual}, paced auctions \citep{celli2022parity} or online allocation problems \citep{balseiro2021regularized}. Our work complements this by showing that in markets similar implementation can be done via the use of the price system. In addition, these papers study the performance of their algorithms but bypass the questions such as second order market effects, who wins and who loses, and whether good incentive properties are preserved.

As discussed in \citet{fazelpour2022algorithmic}, fairness constraints are often designed through desirable properties of ideal states. However, when the current state is far from ideal, enforcing the constraints may not bring us closer to the ideal state. Similar points has been made in the context of classification \citep{corbett2017algorithmic,menon2018cost, kasy2021fairness,liu2018delayed,weber2022enforcing}, various `statistical discrimination' scenarios \citep{fang2011theories,mouzannar2019fair,emelianov2022fair}, and counterfactual or causal definitions of fairness \citep{kusner2017counterfactual,imai2020principal,nilforoshan2022causal}. Our results about parity and floor constraints complement this literature showing that interventions in markets can sometimes (note, not always!) make things worse than before. In the Appendix we provide a longer discussion of the various results in this literature, which context they apply in, and how they relate to our results.

\section{Fisher Markets}\label{fisher_markets}
We will work with a market where there are $n$ buyers (e.g. advertisers) and $m$ items (e.g. ad slots). Each item has supply $s_j$ which for the purposes of this section we take to be $1$. We assume that fractional allocations are allowed (these can also be thought of as randomized allocations). Fractional allocation ensures existence and polynomial time computability of equilibria. In practice numerically that in many Fisher markets with linear utility, almost all assignments are $1$ or $0$ \citep{kroer2019scalable}.

We let $x \in \R^{n\times m}_+$ be an allocation of items to buyers. We let $x_i \in \mathbb{R}^m_{+}$ be the bundle of goods assigned to buyer $i$ with $x_{ij}$ being the assignment of item $j$. Each buyer has a utility function $v_i (x_i)$. 

We assume the utilities are all homogeneous of degree one (i.e. that $\alpha v_i (x_i) = v_i (\alpha x_i)$ for $\alpha>0$), concave, and continuous. We also assume that there exists an allocation $x$ such that $v_i (x_i) > 0$ for all buyers $i$.

This class of utilities includes the constant elasticity of substitution (CES) family of utilities, and linear utilities are special case of CES utilities where each buyer has a value $v_{ij}$ for each item and $v_i (x_i) = \sum_{j} v_{ij} x_{ij}$. Linear utilities are precisely what is assumed in many studies of online advertising markets~\citep{conitzer2019pacing,balseiro2021budget}.

Each item will be assigned a price $p_j > 0$, with $p$ being the full price vector, and each buyer has a budget $B_i > 0$. We study the quasi-linear (QL) case where leftover money is kept by the buyer; this is most natural for several real world markets such as advertising markets. Most of our results work the same for the non-QL case.

Let $\delta_i$ be the leftover budget under prices $p$ with allocation $x_i$ ($\delta_i = B_i - p^T x_i)$. The quasi-linear utility that a buyer experiences under prices $p$ and allocation $x$ is then
 $$u_i (x_i, \delta_i) = v_i (x_i) + \delta_i$$

Given a price vector $p$, the \textbf{demand} of a buyer is $$D_i(p) = \text{argmax}_{x_i: x_i^\top p \leq B_i} u_i(x_i, \delta_i).$$

A \textbf{market equilibrium} is an allocation $x^*$ and a price vector $p^*$ such that
    1) allocations are demands: for each $i$ $x^*_i \in D_i(p^*)$, and
    2) markets clear: for each $j$, $\sum_i x_{ij} \leq 1$, with equality if $p_j > 0$.

In Fisher markets with our family of utilities, market equilibria always exist. The general problem of market equilibrium is computationally difficult~\citep{chen2009spending,vazirani2011market,chen2021complexity} but in the family of utilities we are considering the equilibrium allocation can be found via solving a version of the Eisenberg-Gale convex program (EG) \cite{eisenberg1959consensus}:
\begin{equation}
\begin{aligned}
\max_{x\geq 0,\delta \geq 0} \quad & \sum_i B_i \text{log} (v_i(x_i) + \delta_i) - \delta_i\\
\textrm{s.t.} \quad &\text{ } \sum_{i} x_{ij} \leq 1, \forall j=1,\ldots,m
\end{aligned}
\label{eq:mnw}
\end{equation}

Note that the market equilibrium definition says nothing about maximizing the sum of log utilities - the market finds equilibrium by finding prices and allocations to make supply meet demand. Rather, it is a deep result that the $x$ which solves the EG problem is also an equilibrium of the Fisher market when we take prices Lagrange multipliers on the supply constraints at the optimum~\citep{eisenberg1959consensus}.

There sometimes may be multiple equilibrium allocations in a Fisher market, however in all equilibria the prices are the same, and, importantly, the utility realized by each buyer is the same. 

This fact that Fisher equilibria are allocations that maximize the budget weighted product of utilities is an important reason they are studied in the fair division community. In the equal budget case Nash social welfare has been called an ``unreasonably effective fairness criterion'' \citep{caragiannis2016unreasonable} and is used in practice in allocation mechanisms such as Spliddit.com \citep{goldman2014spliddit}. 

\subsection{Desirable Properties of Market Equilibria}
In addition to making supply meet demand, equilibria have other desirable properties when used as allocation mechanisms. We discuss them here as later we will want to ask whether our interventions preserve them or not. We list properties of interest informally before describing them in more detail:


\textbf{Equilibria are Pareto efficient.} 
In the QL case this is defined using the leftover budget as well as the seller who receives the payments. Let $(x_i, \delta_i)$ be the bundle of goods received by buyer $i$ and their leftover budget with $(x, \delta)$ being all buyers' allocations/budgets.  

An allocation $(x, \delta)$ is \textbf{Pareto optimal} if for every alternative allocation $(x',\delta')$ such that some buyer strictly improves their utility, it must be the case that for some other buyer $i$, we have $v_i(x_i')+\delta_i' < v_i(x_i)+\delta_i$ or $\sum_i \delta_i' > \sum_i\delta_i$, meaning that the seller is worse off. Similarly, if $\sum_i \delta_i' < \sum_i \delta_i$, meaning the seller strictly improves, then it must be the case that for some buyer $i$, we have $v_i(x_i')+\delta_i' < v_i(x_i)+\delta_i$. 

\textbf{Equilibria are envy-free} when accounting for budgets. Let $(x, \delta)$ be an equilibrium allocation. For any two buyers $i, i'$ with budget ratio $\gamma = \frac{B_i}{B_{i'}}$ we always have that $v_i(x_i) + \delta_i \geq v_i (\gamma x_{i'}) + \gamma \delta_{i'}$.

\textbf{Incentives for strategic behavior are minimized when markets are large.} Consider the \emph{ Fisher mechanism}: each individual reports their valuation function to a center. The center computes the market equilibrium, and gives each individual their corresponding allocation. This is strongly related to CEEI \citep{varian1974efficiency,budish2012multi} though here utilities are quasi-linear. 

Auction-based ad markets act somewhat like a centralized Fisher mechanism. Buyers report their valuations for various events (clicks, conversions, impressions, etc...). Auctions happen for each impression and the ad system bids for each buyer using logic to adjust bids over time to spend budgets. For appropriate choices of pacing rules and auctions the resulting outcome is a Fisher market equilibrium \citep{conitzer2019pacing}. In this case, whether the individuals have incentives to lie to the center is extremely important.

At a high level, a mechanism is said to be \textbf{strategy-proof in the large} (SPL \citep{azevedo2018strategy}) if, when the market is large enough, there is little gain from lying to the center about one's valuation. 

Formally, a large market is defined as: let there be $n$ buyers, each with budget 1. The size of the market is determined by this $n$. Let there be $m$ items with generic item $j$, and let the supply of each item be $s_j n$ where $s_j$ is some constant. Then SPL is defined as for any $\epsilon > 0$ for all possible valuations $v_i$, there exists $\bar{n}$ such that if $n > \bar{n}$ the gain to any buyer of type $v_i$ from misreporting any $v'_i$ instead of their true $v_i$ is less than $\epsilon$. The standard Fisher mechanism is SPL (see the arguments in \citep{azevedo2018strategy} and the Appendix for more details).

\section{Changing Market Outcomes}
\label{sec:changing market outcomes}
We now move to our main problem. We consider a market designer that wants market allocations to satisfy certain constraints. Though in some cases (e.g. fully centralized) the designer can control quantities allocated directly. However, in many cases of interest the market designer must work through the price system or they may wish to do so due to efficiency properties.

We consider a market designer that is able to subsidize and tax purchases. Let $\bar{p}$ be an $n \times m$ matrix of \textit{price interventions}. We assume that each buyer $i$ faces price $p_j + \bar{p}_{ij}$ for item $j$ with $\bar{p}_{ij} > 0$ being a tax and $\bar{p}_{ij} < 0$ being a subsidy. Let $\bar p_i \in \R^m$ be the vector of price interventions for buyer $i$.

The demand of a buyer $i$ with base prices $p$ and interventions $\bar{p}$ is given by $D_i(p + \bar p_i)$. This lets us extend the definition of a market equilibrium. We say that given a $\bar{p}$ the triple $(x, p, \bar{p})$ is a \textbf{tax-subsidy equilibrium} if
\begin{enumerate}
    \item Each $x_i \in D_i (p + \bar{p}_i)$
    \item For all $j$ $\sum_{i} x_{ij} \leq 1$, with equality if $p_j + \bar{p}_{ij} > 0$.
\end{enumerate}

We now show that with appropriate choice of $\bar{p}$ the market designer can force the market equilibrium allocation to satisfy a set of linear constraints. The proof of the result is constructive and gives us the algorithm for computing such a $\bar{p}$ from knowledge of the market structure and the desired constraints.

First, let us formally define linear constraints. We define a linear operator $A_1 : \mathbb{R}^{n \times m} \rightarrow \mathbb \mathbb{R}^{K_1}$ which inputs the allocation matrix and yields a vector $b \in \mathbb{R}^{K_1}$. We let $A_1 x \leq b_1$ be the inequality constraints we wish to hold (the choice of less than in the inequality is arbitrary as the sign of $A$ can always be changed). We define another linear operator $A_2: \mathbb R^{n \times m} \rightarrow \mathbb R^{K_2}$ to represent equality constraints $A_2 x = b_2$. 

This gives us the constrained EG program:
\begin{align}
\begin{array}{rlll}
    \displaystyle\max_{x\geq 0,\delta \geq 0} &  \multicolumn{2}{l}{\displaystyle\sum_{i\in T} B_i \log \big(v_i(x_i) +\delta_i \big) - \delta_i} \\
    \text{s.t.} & \displaystyle \sum_{i} x_{ij} \le 1, & \forall j, \\
    & A_1 x \leq b_1, & \\
    & A_2 x = b_2. &
\end{array}    
\label{eq:con_eg}
\end{align}

We now show how the solution to the EG program can be used to construct the price interventions we seek.

\begin{prop}\label{prop:coneg_works}
Let $x^*$ be the solution to the constrained EG program \ref{eq:con_eg}. Let $\lambda^*_1, \lambda^*_2$ be the vectors of Lagrange multipliers on the inequality and equality constraints at the optimum. Define $$\bar{p}_{ij}^* = \sum_{k=1}^{K_1} A_{1kij} \lambda_{1k}^* + \sum_{k=1}^{K_2} A_{2kij} \lambda_{2k}^*.$$ Then there exists $p$ such that $(x^*, p, \bar{p}^*)$ is a tax-subsidy equilibrium where the allocation satisfies the constraints given by $A_1 x^* \leq b, A_2 x^* = b.$
\end{prop}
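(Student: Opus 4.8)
The plan is to read off the base-price vector $p$ and verify both equilibrium conditions directly from the KKT conditions of the convex program \ref{eq:con_eg}. The objective is concave (a nonnegative combination of the concave terms $B_i\log(v_i(x_i)+\delta_i)$ minus the linear $\delta_i$) and every constraint is affine, so the feasible region is a convex polyhedron and the linearity constraint qualification holds; hence, provided the constraint set is feasible and admits an allocation giving each buyer positive utility (so the objective is finite and $u_i^*:=v_i(x_i^*)+\delta_i^*>0$ at the optimum), there exist optimal $x^*,\delta^*$ together with dual multipliers satisfying the KKT conditions, which are moreover sufficient for optimality. I would write $p_j\ge 0$ for the multiplier on the supply constraint $\sum_i x_{ij}\le 1$, and $\lambda_1^*\ge 0,\ \lambda_2^*$ for the multipliers on $A_1 x\le b_1$ and $A_2 x=b_2$, and take this $p$ as the base-price vector in the claimed equilibrium.

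Next I would form the Lagrangian and observe that, once $p,\lambda_1^*,\lambda_2^*$ are fixed, it decomposes additively across buyers. Collecting the constraint contributions for buyer $i$ yields exactly $\bar p_{ij}^*=\sum_k A_{1kij}\lambda_{1k}^*+\sum_k A_{2kij}\lambda_{2k}^*$ as the coefficient sitting next to $p_j$ on $x_{ij}$. Thus, writing $g_{ij}$ for a supergradient of the concave $v_i$ at $x_i^*$ (needed since $v_i$ is only assumed concave, not differentiable), stationarity in $x_{ij}$ reads $\frac{B_i}{u_i^*}g_{ij}-p_j-\bar p_{ij}^*\le 0$ with equality when $x_{ij}^*>0$, and stationarity in $\delta_i$ reads $\frac{B_i}{u_i^*}\le 1$ with equality when $\delta_i^*>0$.

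The crux of the argument is to show that these are precisely the optimality conditions of buyer $i$'s quasi-linear demand problem $\max_{x_i\ge 0,\ (p+\bar p_i^*)^\top x_i\le B_i}\ v_i(x_i)+B_i-(p+\bar p_i^*)^\top x_i$ under the intervened prices $q_i=p+\bar p_i^*$. I would write that problem's KKT system with a budget multiplier $\mu_i\ge 0$ and match it to the EG conditions through the reparametrization $1+\mu_i=u_i^*/B_i$. The two cases $\delta_i^*>0$ (which forces $u_i^*=B_i$, hence $\mu_i=0$ and a slack budget) and $\delta_i^*=0$ (which forces $u_i^*\ge B_i$, hence $\mu_i\ge 0$ and a tight budget, so budget complementary slackness holds automatically) make the two systems coincide, giving $x_i^*\in D_i(p+\bar p_i^*)$. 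This bookkeeping --- verifying that the EG multipliers translate into a legitimate nonnegative budget multiplier and that the leftover-budget conditions line up with budget complementary slackness --- is the step I expect to be the main obstacle and the one that must be carried out carefully.

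Finally, market clearing follows from complementary slackness on the supply constraints: feasibility gives $\sum_i x_{ij}^*\le 1$ for every $j$, while $p_j(\sum_i x_{ij}^*-1)=0$ forces equality whenever $p_j>0$. The constraints $A_1 x^*\le b_1$ and $A_2 x^*=b_2$ hold simply because $x^*$ is feasible for \ref{eq:con_eg}. Combining these facts shows that $(x^*,p,\bar p^*)$ is a tax--subsidy equilibrium whose allocation satisfies the desired linear constraints.
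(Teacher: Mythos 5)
Your setup is the same as the paper's (KKT conditions of the constrained EG program, base prices taken to be the supply multipliers, personalized prices assembled from $\lambda_1^*,\lambda_2^*$ through $A_1,A_2$, clearing from complementary slackness), but there is a genuine gap at exactly the step you flag as the crux. To match the EG stationarity conditions with the KKT system of the buyer's demand problem via $1+\mu_i = u_i^*/B_i$, you must also supply primal feasibility and budget complementary slackness for that problem: you need $(p+\bar p_i^*)^\top x_i^* \le B_i$ in all cases, and $(p+\bar p_i^*)^\top x_i^* = B_i$ whenever $\mu_i>0$ (your case $\delta_i^*=0$, $u_i^*>B_i$). You assert these (``a slack budget,'' ``a tight budget \ldots holds automatically''), but neither follows from the EG KKT conditions alone: in the EG program $\delta_i$ is a free decision variable, not linked to prices or budgets by any constraint, so nothing in the KKT system by itself identifies $\delta_i^*$ with the actual leftover budget $B_i - (p+\bar p_i^*)^\top x_i^*$. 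The paper closes exactly this hole using homogeneity of degree one of $v_i$: multiplying stationarity by $x_{ij}^*$, summing over $j$, and applying the generalized Euler identity for subdifferentials ($\sum_j g_{ij}x_{ij}^* = v_i(x_i^*)$ for $g\in\partial v_i(x_i^*)$) gives the budget-exactness identity $(p+\bar p_i^*)^\top x_i^* + \delta_i^* = B_i$, which delivers both feasibility and the required tightness. Your proposal never invokes homogeneity anywhere, and the proposition is false without it, so the gap cannot be patched by pure KKT bookkeeping.

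A concrete counterexample shows where your matching breaks: one buyer, one item with unit supply, $B=1$, no side constraints, and $v(x)=2\sqrt{x}$ (concave, continuous, but homogeneous of degree $\tfrac12$, not $1$). The EG optimum is $x^*=1$, $\delta^*=0$, $u^*=2$, with supply multiplier $p=\tfrac{B}{u^*}v'(1)=\tfrac12$. Your reparametrization gives $\mu = u^*/B - 1 = 1 > 0$, so complementary slackness demands $p\,x^*=B$, yet $p\,x^* = \tfrac12 \ne 1$; and indeed the buyer's demand at price $\tfrac12$ is $x=2$ (spend the whole budget), not $x^*=1$, so $(x^*,p)$ is not an equilibrium. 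The Euler-identity step is therefore not routine bookkeeping; it is the one place where the structural assumption on $v_i$ enters the proof, and your argument needs it made explicit.
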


The full proof is relegated to the Appendix. We give the intuition here. The standard argument for using the EG program to calculate an equilibrium is that we can take the solution $x^*$ and the vector of Lagrange multipliers on the supply constraints at the optimum $\lambda^S$ and we show that $(x^*, \lambda^S)$ form an equilibrium. We extend this argument to show that the Lagrange multipliers on the other constraints work a similar way, although to recover their price interpretation they need to be `passed through' the constraint mappings $A_1, A_2$.




Arbitrary systems of constraints may require a complex system of taxes/subsidies (e.g. specific taxes/subsidies for each buyer/item pair). However, the ``simpler'' the constraint, the simpler the tax/subsidy required to attain it. For example, consider the \emph{quota} constraint that we will study further in later sections. Here $B$ is a subset of buyers and $R$ is a subset of items and our goal is that, in aggregate, buyers in $B$ have at least some fixed level of exposure to $R$. This is written as the constraint $\sum_{i \in \mathcal{C}} \sum_{j \in B} x_{ij} \geq L.$

A tax/subsidy to make our equilibrium allocation satisfy this goal will only require one common subsidy $\bar{p}$ that all buyers in $B$ will face for all items in $R$. 

The full effects of taxes and subsidies in Fisher markets are not always straightforward. Typically, taxes are viewed as ways to decrease demand (and increase price) and subsidies for the opposite. However, this is more complex in markets because ``base'' prices $p$ are determined in equilibrium \textbf{given} the market designer chosen tax/subsidy schedule $\bar{p}.$

If we take any price intervention matrix $\bar{p}$, which may have taxes, and then add a constant subsidy $c$ so that the intervention becomes $\bar{p}-c$, then base prices $p$ will adjust to $p+c$ and the equilibrium will not change in any meaningful way.

Note that while there are many equivalent (in outcome) interventions, they may vary in complexity of implementation. For example, consider an intervention of the form $\bar{p}_{ij} = t$ for some $(i,j)$ pair and $0$ otherwise. It requires the market designer to just enforce a tax on a single transaction. On the other hand, an an an an an an an an equivalent intervention of the form $\bar{p}_{ij} = t+c$ for $(i,j)$ and $\bar{p}_{i'j'} = c$ everywhere else, requires taxes on \textit{every} transaction. 

Thus, while we continue to refer to price interventions as taxes and subsidies it is important to remember that it is relative, not absolute, levels of interventions that matter for outcomes.

\section{Online Price Intervention Computation}
The discussion above computed price interventions directly but required buyer valuations to be known to the market designer, as well as the ability to solve a potentially large convex program. 

We now discuss an online setting and propose a minimal-information decentralized algorithm that requires the market designer only to observe the equilibrium allocations and perform simple update procedures on their price interventions. 

We consider the online case where time is indexed by $t \in \lbrace, 0, 1, 2, \dots \rbrace.$ The set of buyers and items remains constant in each time period $t$. The set of valuations, budgets, and supplies stays constant but unknown to the market designer. At the beginning of each period, the market designer sets a tax/subsidy schedule $\bar{p}^t.$ Given these taxes/subsidies the buyers/sellers transact in some decentralized way and an equilibrium $(x^{*t}, p^{*t}, \bar{p}^t)$ arises. The equilibrium allocations and prices are observed to the market designer who can then update taxes/subsidies for the next period $t+1.$

We ask: is there a process by which the market designer can update taxes/subsidies dynamically such that they converge to taxes/subsidies that implement a given set of constraints?

We show that the answer is yes. This is a result of the fact that taxes/subsidies are Lagrange multipliers on the constraints, and thus we can take subgradient steps on them simply by observing the constraint violation. The full algorithm for online price intervention computation (OPIC) is below:

\begin{algorithm}
\caption{Online Price Intervention Computation (OPIC)}
\begin{algorithmic}
\STATE Input: Constraints $A_1, A_2$, learning rate sequence $\gamma^t$
\STATE Initialize $\lambda^0_1 \in \mathbb{R}^{K_1}_+$, $\lambda^0_2 \in \mathbb{R}^{K_2}$ 
\FOR{$t \in \lbrace 0, 1, 2, \dots \rbrace$}
\STATE Set $\bar{p}^t_{ij} = \sum_{k=1}^{K_1} A_{1kij} \lambda^t_{1k} + \sum_{k=1}^{K_2} A_{2kij} \lambda^t_{2k}$
\STATE Observe the resulting equilibrium allocation $x^*_{t} (\bar{p}^t)$
\STATE $\nabla \lambda_1  = (A_1 x^*_t - b_1)$
\STATE $\nabla  \lambda_2  = (A_2 x^*_t - b_2)$
\STATE $\lambda^{t+1}_1 = \text{max} \lbrace 0, \lambda^{t}_1 + \gamma  \nabla \lambda_1 \rbrace$
\STATE $\lambda^{t+1}_2 =\lambda^{t}_2 +  \gamma  \nabla \lambda_2$
\ENDFOR
\end{algorithmic}
\end{algorithm}

Suppose the constraint set does not include any redundant constraints (if it does, eliminate constraints until it no longer does). Let $\bar{P}^*$ be the set of optimal Lagrange multipliers on the constraints, derived from the convex program \ref{eq:mnw with constraints} for the given constraint set. Under a standard assumption on learning rates from optimization theory \citep{bertsekas2015convex}, we get the following result

\begin{prop}\label{prop:opic_works}
If $\sum_{t=0}^{\infty} \gamma^t = \infty$ and $\sum_{t=0}^{\infty} (\gamma^t)^2 < \infty$ then $\lim_{t \to \infty} \bar{p}^t \to \bar{P}^*.$ 
\end{prop}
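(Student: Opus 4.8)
The plan is to recognize OPIC as nothing more than the projected subgradient method applied to the Lagrangian dual of the constrained EG program \eqref{eq:con_eg}, and then to invoke the standard convergence theory for such methods under the stated step-size conditions. First I would form the partial dual of \eqref{eq:con_eg}, dualizing only the constraints $A_1 x \le b_1$ and $A_2 x = b_2$ while leaving the supply constraints $\sum_i x_{ij}\le 1$ and nonnegativity inside the inner problem. Writing the dual function as
\begin{equation*}
g(\lambda_1,\lambda_2) = \max_{\substack{x\ge 0,\ \delta\ge 0\\ \sum_i x_{ij}\le 1}} \Big[\, \textstyle\sum_{i} B_i\log\big(v_i(x_i)+\delta_i\big)-\delta_i - \lambda_1^\top(A_1 x - b_1) - \lambda_2^\top(A_2 x - b_2)\,\Big],
\end{equation*}
the cross terms $-\lambda_1^\top A_1 x - \lambda_2^\top A_2 x$ collapse to $-\sum_{ij}\bar p_{ij} x_{ij}$ with $\bar p_{ij}=\sum_k A_{1kij}\lambda_{1k}+\sum_k A_{2kij}\lambda_{2k}$, which is exactly the intervention set by the algorithm. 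Hence the inner maximization is the supply-constrained EG problem under interventions $\bar p^t$, and by the logic of \cref{prop:coneg_works} its maximizer is precisely the tax-subsidy equilibrium allocation $x^*_t$ that OPIC observes.

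Second, since $g$ is a pointwise maximum of functions affine in $(\lambda_1,\lambda_2)$ it is convex, and the envelope (Danskin) theorem gives that $(b_1 - A_1 x^*_t,\; b_2 - A_2 x^*_t)$ is a subgradient of $g$ at $(\lambda^t_1,\lambda^t_2)$. The OPIC updates add $\gamma^t(A_1 x^*_t - b_1)$ and $\gamma^t(A_2 x^*_t - b_2)$ and then project $\lambda_1$ onto $\mathbb{R}^{K_1}_+$ — that is, they are exactly projected subgradient descent steps for the dual minimization. Equilibrium multiplicity is harmless here: every equilibrium allocation is an inner maximizer and therefore supplies a valid subgradient, so the argument does not depend on which equilibrium the decentralized market happens to select.

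Third I would verify the hypotheses of the standard convergence theorem \citep{bertsekas2015convex}. Strong duality and a nonempty dual optimal set $\bar P^*$ follow from concavity of the EG objective, the assumed existence of a feasible allocation with strictly positive utilities, and a constraint qualification, which removing redundant constraints secures; \cref{prop:coneg_works} already exhibits optimal multipliers. The subgradients are uniformly bounded because allocations lie in the compact set $\{x\ge 0:\sum_i x_{ij}\le 1\}$, so the residuals $A_1 x^*_t - b_1$ and $A_2 x^*_t - b_2$ are bounded. Given $\sum_t\gamma^t=\infty$ and $\sum_t(\gamma^t)^2<\infty$, the usual Fej\'er-monotonicity estimate $\|\lambda^{t+1}-\lambda^*\|^2 \le \|\lambda^t-\lambda^*\|^2 - 2\gamma^t\big(g(\lambda^t)-g^*\big)+(\gamma^t)^2 C$ telescopes to yield $\operatorname{dist}(\lambda^t,\bar P^*)\to 0$. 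Since $\bar p^t$ is a fixed linear image of $\lambda^t$, convergence of the multipliers transfers directly to $\bar p^t \to \bar P^*$.

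The main obstacle is the first step: rigorously certifying that the observed market allocation really is the maximizer of the inner Lagrangian, so that the measured constraint violation is a bona fide subgradient of the dual objective rather than merely a heuristic update direction. This is precisely where \cref{prop:coneg_works} does the work, and the only delicate point is handling non-unique equilibria, which the remark above dispatches. Everything after that identification is a direct application of textbook subgradient convergence theory.
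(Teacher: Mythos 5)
Your proposal is correct and follows essentially the same route as the paper's own proof: partial dualization of the constrained EG program so that OPIC becomes projected dual subgradient descent, identification of the inner maximizer with the observed tax-subsidy equilibrium (the paper's Lemma~\ref{lem:tax_exists} combined with Proposition~\ref{prop:coneg_works}), Danskin's theorem for the subgradients, boundedness from supply feasibility, and the standard step-size convergence theorem (the paper cites \citet{bertsekas2015convex}, Proposition~3.2.6, whose proof is exactly the Fej\'er-monotonicity telescoping you spell out). Your explicit handling of equilibrium multiplicity and of the subgradient sign is slightly more careful than the paper's write-up, but it is the same argument.
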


In other words: a dynamic similar to Walrasian t\^atonnement \citep{walras} on the taxes and subsidies eventually implements the constraints. The full proof is in the Appendix. There are two important steps. The first is a lemma that may be of independent interest.

\begin{lemma}\label{lem:tax_exists}
For any $\bar{p}$ there exists $(x, p)$ such that $(x, p, \bar{p})$ is a tax-subsidy equilibrium.
\end{lemma}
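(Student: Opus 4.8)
The plan is to reduce the existence of a tax-subsidy equilibrium to the existence of a maximizer of a single modified Eisenberg--Gale program in which the interventions enter as a linear term in the objective, and then to read off the base prices as the Lagrange multipliers on the supply constraints, exactly as in the unconstrained case. Concretely, I would consider
\begin{equation*}
\max_{x \geq 0,\, \delta \geq 0} \ \sum_i \Big[ B_i \log\big(v_i(x_i) + \delta_i\big) - \delta_i \Big] - \sum_{i,j} \bar{p}_{ij}\, x_{ij} \quad \text{s.t.} \quad \sum_i x_{ij} \leq 1 \ \ \forall j,
\end{equation*}
and let $p_j \geq 0$ denote the multiplier on the $j$-th supply constraint. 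Because the added term is linear, the objective stays concave, so this is a convex program of the same type used to find unconstrained Fisher equilibria.

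\textbf{Existence of a maximizer.} First I would argue the maximum is attained. The supply constraints force $0 \le x_{ij} \le 1$, so the feasible set in $x$ is compact; in $\delta$, each term $B_i \log(\cdot + \delta_i) - \delta_i$ tends to $-\infty$ as $\delta_i \to \infty$, so the objective is coercive in $\delta$ and the effective region is compact. The point $x = 0,\ \delta_i = B_i > 0$ is feasible with finite value, while the objective tends to $-\infty$ whenever some $v_i(x_i)+\delta_i \to 0^+$; hence a maximizer exists and lies in the region where every $u_i := v_i(x_i)+\delta_i > 0$.

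\textbf{From KKT to equilibrium.} The core step is to show that the KKT conditions of this program coincide with the tax-subsidy equilibrium conditions at effective prices $q_{ij} := p_j + \bar{p}_{ij}$. Writing $\beta_i := B_i/u_i$, stationarity in $\delta_i$ gives $\beta_i \le 1$ (with equality when $\delta_i > 0$), and stationarity in $x_{ij}$ gives $\beta_i g_{ij} \le q_{ij}$ with equality when $x_{ij} > 0$, where $g_{ij}$ is a supergradient component of $v_i$. Comparing with the buyer's own problem --- buyer $i$ maximizes $v_i(x_i) - \sum_j q_{ij} x_{ij}$ subject to $\sum_j q_{ij} x_{ij} \le B_i$ --- these match under the identification $1 + \mu_i = 1/\beta_i$ for the budget multiplier $\mu_i \ge 0$, which is well defined precisely because $\beta_i \le 1$; hence $x_i \in D_i(p + \bar{p}_i)$. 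To confirm that the budget is respected I would invoke homogeneity of degree one (Euler's relation $\sum_j g_{ij} x_{ij} = v_i(x_i)$): multiplying the tight stationarity conditions by $x_{ij}$ and summing yields $\sum_j q_{ij} x_{ij} = \beta_i v_i(x_i) \le B_i$, with equality iff $\delta_i = 0$. Market clearing with complementary slackness ($p_j > 0 \Rightarrow \sum_i x_{ij} = 1$) is immediate from the supply-constraint multiplier.

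\textbf{Main obstacle.} The delicate point, which I would emphasize, is that the lemma allows \emph{arbitrary} $\bar{p}$, including large subsidies ($\bar{p}_{ij} \ll 0$); at a negative effective price an individual buyer's demand is unbounded, so $D_i$ would be empty and no equilibrium could exist. The resolution is that the base price is endogenous: the stationarity inequality gives $q_{ij} \ge \beta_i g_{ij} \ge 0$ (using monotonicity of $v_i$, as holds for the CES and linear families), so every effective price at the maximizer is automatically nonnegative --- in particular, on any item carrying a subsidy the multiplier $p_j$ rises enough to offset it, and by complementary slackness that item is fully allocated. Establishing this self-correction, together with the coercivity needed for a maximizer to exist, is the crux; the remaining steps are the routine KKT bookkeeping sketched above.
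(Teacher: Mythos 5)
Your proposal is correct and takes essentially the same route as the paper: the paper's proof also introduces the modified EG program with the linear term $-\sum_{ij}\bar{p}_{ij}x_{ij}$ added to the objective, and then invokes the same KKT/generalized-Euler argument used for Proposition \ref{prop:coneg_works} to conclude that the supply-constraint multipliers $p$ support $(x,p,\bar p)$ as a tax-subsidy equilibrium. The extra details you supply --- attainment of the maximum and the observation that stationarity forces every effective price $p_j+\bar p_{ij}$ to be nonnegative even under large subsidies --- are left implicit in the paper but are consistent with its argument.
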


This lemma is a fairly simple extension of the EG convex program. However, it is important since it tells us that an equilibrium can actually be attained in each period.

We then use the results from the last section which allow us convert Lagrange multipliers at the optimum into a price intervention system. OPIC builds on this idea by considering the Lagrangified version of the constrained EG program~\ref{eq:mnw with constraints}, using the subgradient method on the resulting dual problem, and converting the Lagrange multipliers to new prices at each step. 

Using our dual formulation,  there are many other possible choices of optimization algorithms that could potentially be applied,  which may yield faster convergence than the subgradient method, and those will have their own price update rule. This is an interesting area for future research but is beyond the scope of this paper.

\section{Market Fairness Constraints}\label{sec:constraints}
We now move from a general constraint setup to focus on specific constraints that have been brought up in the fairness literature in markets and other contexts.

Many of our motivating examples are from the literature on online ad markets. In this literature, linear utility is commonly used. Buyers have vectors of valuations for each item and the total utility of a bundle is the sum of the item utilities $v_i (x_i) = v_i^T x_i$. We use this for our examples going forward.

For each of these constraints we move from asking whether a market designer \textit{can} implement it to whether they \textit{should}. In particular, we study whether good properties of market allocations (Pareto optimality, envy freeness, SPL) are preserved, we ask who wins and who loses from these interventions, and, most importantly, we ask whether these interventions achieve various motivating goals.

\subsection{Per Buyer Parity}
\citet{ali2019discrimination} consider the issue of preferential exposure of certain groups to certain kinds of job ads. A similar issue occurs in advertising for certain legally protected ad categories (e.g. housing). 

This leads to our first family, \textbf{per buyer parity} (PBP) constraints. 

Let items have a binary attribute, either $A$ or $B$. There is a subset of buyers called \textbf{constrained buyers} the set of which we denote by $C$. In order to combat disparate outcomes, constrained buyers are required to have `balanced' exposure to items. For example: in an online ad market buyers could be housing or job ads, items are ad slots for individual users, the binary attribute is some protected class on which ad delivery may not differ.

This leads to the following linear constraints: for any $i \in C$ we require that $$\sum_{j \in A} x_{ij} = \alpha \sum_{j \in B} x_{ij}.$$ 

Other examples in the literature can also be expressed as choices of $\alpha$. For example, the equal exposure constraint sets $\alpha = \frac{N_{B}}{N_{A}}$ where $N_K$ is the size of each group. The choice of $\alpha$ strongly depends on context, for example \citet{ali2019discrimination} study job ads and suggest using a weight of number of expected qualified candidates in each group rather than simply group size.

For the purpose of the discussion here we set the total number of $A$ and $B$ items equal and let $\alpha=1$. This simplifies notation and lets us state our main results and counter-examples. 

\begin{table*}%
\begin{tiny}
  \centering
  \subfloat{ \begin{tabular}{lcr}
    \toprule
    V &   &   \\
    \midrule
    Buyer & Item $A$ & Item $B$ \\
    C & 1.5 & .4 \\ 
    C & .4 & 1.5 \\ 
    U & 5 & 2 \\
    U & 2 & 5 \\
    \bottomrule \\
      &   &     
  \end{tabular}}%
  \qquad
  \subfloat{ \begin{tabular}{lcrrr}
    \toprule
    $X^{EG}$ &   &  &  & \\
    \midrule
    Buyer & Item $A$ & Item $B$ & $\delta_i$ &$u_i$ \\
    C & .33 & 0 & .5 & 1 \\ 
    C & 0 & .33 & .5 & 1\\ 
    U & .66 & 0 & 0 & 3.33\\
    U & 0 & .66 & 0 & 3.33\\
    \bottomrule \\
    Price & 1.5 & 1.5 & -
  \end{tabular}}
  \subfloat{ \begin{tabular}{lcrrr}
    \toprule
    $X^{PBFP}$ &   &  &  & \\
    \midrule
    Buyer & Item $A$ & Item $B$ & $\delta_i$ & $u_i$ \\
    C & 0 & 0 & 1 & 1 \\ 
    C & 0 & 0 & 1 & 1\\ 
    U & 1 & 0 & 0 & 5\\
    U & 0 & 1 & 0 & 5\\
    \bottomrule \\
    Price & 1 & 1 & -
  \end{tabular}}
  \caption{An example of a set of valuations with budgets set to $1$ where imposing PBFP constraints leads to both parity constrained (C) buyers exiting the market completely and all items going to the unconstrained (U) buyers. Under the constraints, the $U$ buyers get both items and get them cheaper than in the original equilibrium, so they are better off.\label{exit_market} }%
  \end{tiny}
\end{table*}

The implicit or explicit goal of PBP constraints is generally to help some underlying disadvantaged group. Therefore, any notion of Pareto optimality must also include this group.

Formally, given an allocation $X$. For a buyer $i$ let the \textbf{$i$-disadvantaged item group} be the one that has less exposure. Let the \textbf{aggregate disadvantaged item group} be the one that has less exposure to $C$ buyers. Typically, parity constraints are motivated by the goal of `improving' the outcomes of this aggregate disadvantaged group. For this reason we will refer to the aggregate disadvantaged group as the \textbf{protected} item group.

Let $P$ be the protected group of items. Let $\sum_{i \in C} \sum_{j \in P} x_{ij}$ be the \textbf{aggregate exposure of the protected group to the $C$ buyers}. An allocation $(x, \delta)$ is \textbf{buyer-protected-item Pareto optimal} if for every alternative allocation $(x',\delta')$ such that some buyer strictly improves their utility or such that the protected group $P$ gains more aggregate exposure to $C$ buyers, it must be the case that for some other buyer $i$, we have $v_i(x_i')+\delta_i' < v_i(x_i)+\delta_i$ or $\sum_i \delta_i' > \sum_i\delta_i$, meaning that the seller is worse off. Similarly, if $\sum_i \delta_i' < \sum_i \delta_i$, meaning the seller strictly improves, then it must be the case that for some buyer $i$, we have $v_i(x_i')+\delta_i' < v_i(x_i)+\delta_i$ or the protected group loses exposure to $C$ buyers. Finally, if $P$ gets more aggregate exposure to $C$ buyers it must be that either some buyer or the seller is strictly worse off. We refer to the original definition in Section \ref{fisher_markets} as \textbf{buyer-only Pareto optimality}. 

We now ask whether per-buyer parity constraints keep good market properties. 

\begin{prop}\label{pbfp_pareto}
PBP constrained equilibria are neither buyer-only Pareto optimal nor buyer-protected-item Pareto optimal.
\end{prop}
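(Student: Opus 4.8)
The plan is to prove the proposition by exhibiting a single market whose PBP-constrained equilibrium fails \textbf{buyer-only} Pareto optimality; the failure of \textbf{buyer-protected-item} Pareto optimality then follows for free. The reason is structural: comparing the two definitions in Section~\ref{sec:constraints}, buyer-protected-item optimality counts an alternative $(x',\delta')$ as an ``improvement'' under strictly more circumstances than buyer-only optimality does---namely, whenever the protected group gains aggregate exposure to $C$, \emph{in addition to} whenever some buyer's utility rises. Hence any witness $(x',\delta')$ that dominates an allocation in the buyer-only sense (some buyer strictly better, and neither any buyer nor the seller worse off) is automatically a witness against buyer-protected-item optimality as well. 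So defeating the weaker notion settles both claims, and I would state this as a one-line a fortiori observation.

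First I would record a reduction specific to the quasi-linear, linear-utility setting. Because leftover budget $\delta_i$ and seller revenue $\sum_i (B_i-\delta_i)$ are perfectly substitutable money, the total surplus available to split among buyers and seller is $\sum_i v_i(x_i) + \sum_i B_i$, whose second term is fixed. Consequently an allocation is buyer-only Pareto optimal \emph{iff} its item allocation $x$ maximizes the total item value $\sum_i v_i(x_i) = \sum_{i,j} v_{ij} x_{ij}$ over the supply-feasible set: any $x$ that leaves item value on the table can be swapped for a maximizer $x^\star$ and the surplus handed back as money to make one agent strictly better off and none worse off. I would prove this equivalence as a short lemma (or cite the standard QL--Fisher argument), reducing the task to producing a market whose PBP equilibrium is \emph{not} item-value maximizing.

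Next I would give the explicit counterexample. Take unit budgets, one constrained buyer $C$ and one unconstrained buyer $U$, and two unit-supply items $a\in A$, $b\in B$, with linear valuations $v_C=(2,\tfrac12)$ and $v_U=(\tfrac12,2)$, so that $C$ strongly prefers exactly the item it will be forced to balance away from. The PBP constraint forces $x_{Ca}=x_{Cb}$, so every PBP-feasible allocation has $C$ holding a balanced bundle $(t,t)$; a direct computation shows the total item value of any such allocation is at most $\tfrac52$, whereas the unconstrained maximizer $a\to C,\ b\to U$ attains $4$ and is the unique maximizer (hence is unbalanced and PBP-infeasible). Since the PBP equilibrium is PBP-feasible, its item value is at most $\tfrac52<4$, so by the reduction it is not buyer-only Pareto optimal. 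To keep the argument self-contained I would instead display the dominating allocation directly: reallocate to $a\to C,\ b\to U$, hold seller revenue fixed, and return the recovered surplus (at least $\tfrac32$) as leftover budget, checking that each buyer is weakly better off and $C$ strictly so.

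The main obstacle is the bookkeeping around the equilibrium itself rather than the dominance argument. To present explicit allocations I must pin down the PBP equilibrium via the constrained Eisenberg--Gale program~\eqref{eq:con_eg}---in particular its equilibrium prices and the subsidy (the Lagrange multiplier on $x_{Ca}=x_{Cb}$ that Proposition~\ref{prop:coneg_works} converts into $C$'s price intervention)---and then confirm that the money redistribution in the dominating allocation keeps every $\delta_i'\ge 0$ and seller revenue nonnegative while leaving each individual buyer weakly better off. A secondary subtlety is the definition of the protected group: since PBP equalizes $A$- and $B$-exposure to $C$ at the equilibrium, the two groups are tied there, so the protected label must be read off the pre-intervention equilibrium; this point is, however, rendered moot by the a fortiori reduction, because the buyer-only witness already forces the buyer-protected-item conclusion.
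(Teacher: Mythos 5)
Your counterexample does establish the \textbf{buyer-only} half of the claim, and in the same style as the paper (whose entire proof is the counterexample in Table~\ref{example_pareto_pbfp}): in your market the PBP equilibrium gives both buyers the bundle $(\tfrac12,\tfrac12)$ with $\delta_C=\delta_U=0$, and the swap $a\to C$, $b\to U$ at unchanged payments raises \emph{both} buyers' values from $\tfrac54$ to $2$ with the seller unchanged, so the monetary bookkeeping you worry about is vacuous --- no money moves, and none could (seller-not-worse plus $\delta'\ge 0$ forces $\delta'\equiv 0$, so your phrase ``return the recovered surplus as leftover budget'' is not what happens; the surplus is realized in kind). Relatedly, your reduction lemma is false as an equivalence: buyer-only Pareto optimality is \emph{not} the same as maximizing $\sum_{ij}v_{ij}x_{ij}$, because compensation is limited by $\delta_i'\ge 0$ and conservation of money, so when all leftover budgets are zero an allocation can be Pareto optimal without being value-maximizing. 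You do not ultimately rely on the lemma, so this is a repairable blemish.

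The genuine gap is the a fortiori step. Under the notion of buyer-protected-item Pareto optimality that the paper actually uses --- see its proof of Proposition~\ref{AEFPO}, where a dominating allocation must weakly improve every buyer \emph{and} weakly increase the protected group's exposure, with some strict improvement --- defeating buyer-only optimality does \emph{not} defeat buyer-protected-item optimality for free: the witness must additionally not reduce the protected exposure. Your witness reduces it to zero: in your market the protected group is $B=\{b\}$ (it gets no exposure to $C$ in the unconstrained equilibrium), the PBP equilibrium has exposure $x_{Cb}=\tfrac12$, and your dominating allocation has $x_{Cb}=0$. Worse, your example cannot be repaired: any supply-feasible $(x',\delta')$ with the seller not worse off (hence $\delta'\equiv 0$), exposure $x'_{Cb}\ge\tfrac12$, and $u_U\ge\tfrac54$ satisfies $\tfrac12 x'_{Ca}+2x'_{Cb}\le\tfrac54$, which gives $u_C\le\tfrac54-\tfrac{15}{2}\bigl(x'_{Cb}-\tfrac12\bigr)$; so all inequalities collapse to equalities and no one strictly improves, i.e.\ your PBP equilibrium \emph{is} buyer-protected-item Pareto optimal. (Your reading is defensible from the letter of the definition, whose first clause omits ``or the protected group loses exposure'' from the permissible costs of a buyer gain; but that reading would make the paper's remark that buyer-protected-item suboptimality is ``more troubling'' than buyer-only suboptimality vacuous, and it contradicts the way the notion is used in the Proposition~\ref{AEFPO} proof.) The fix is to choose the market so that the improving transfer never touches the protected item: this is exactly what the paper's example does --- $v_C=(2,2)$, $v_U=(0,2)$, so transferring some of item $A$ from $U$, who values it at zero, to $C$ makes $C$ strictly better while leaving $U$, the seller, and the protected exposure $x_{Cb}$ all unchanged --- letting a single witness defeat both notions simultaneously.
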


We show counterexamples for this in shown the Appendix Example Table \ref{example_pareto_pbfp}. While lack of buyer-only Pareto optimality may not be surprising (since often the goal of the intervention is to increase exposure of the protected group), the lack of buyer-protected-item Pareto optimality is more troubling.

In addition, imposing parity constraints on a single buyer also imposes pecuniary externalities (i.e. second order effects) on other buyers. The net of these second order effects may be large. This means, winners and losers are not clear.

Consider two buyers $i, k$. Suppose we add parity constraints only to $i$, then $i$ is made worse off. However, the addition of parity constraints changes $i's$ demand, it reduces demand for $i's$ originally advantaged group and increases it for $i's$ originally disadvantaged group. This means the second order effect can be positive or negative from buyer $k$'s perspective. In addition, the constraint may completely miss the mark of its original goal.

\begin{prop}\label{pbfp_decrease}
Adding PBP constraints can decrease the exposure of the protected item group to $C$ buyers relative to the original equilibrium.
\end{prop}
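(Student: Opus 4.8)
Since the statement is an existence (``can decrease'') claim, the plan is to exhibit an explicit Fisher market with linear utilities in which the protected group's aggregate exposure to the constrained buyers is strictly smaller after imposing the per-buyer parity constraint than before. I would build the example directly on the market in Table \ref{exit_market}, which already displays the key phenomenon --- constrained buyers can be driven out of the market entirely by the parity constraint --- and adapt it so that the protected group is unambiguously defined.

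The first step is to break the $A/B$ symmetry present in Table \ref{exit_market}. In that market the two item groups receive exactly equal aggregate exposure from the constrained buyers, so neither is strictly the aggregate disadvantaged (protected) group. I would skew the composition of $C$ so that strictly more constrained buyers favor group $A$ than favor group $B$ (for instance by adding a constrained buyer with the $A$-favoring valuation profile, or by perturbing the valuations), and add unconstrained buyers and item supplies so the market still clears. The aim is an unconstrained equilibrium in which $\sum_{i \in C}\sum_{j \in B} x_{ij}^{\mathrm{EG}} > 0$ but strictly less than the aggregate exposure of $A$ to $C$, so that $P = B$ is the protected group.

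The second step exploits the quasi-linear ``exit'' mechanism. In the unconstrained equilibrium each constrained buyer concentrates spending on its favored group because the disfavored item's value lies below its price. Under parity, each $i \in C$ is forced to hold equal exposure to $A$ and $B$; at the post-constraint (lower) prices the forced purchase of the low-value item more than cancels the gain from the high-value item, so every constrained buyer strictly prefers to keep its budget as leftover and buys nothing. Because parity equalizes each buyer's exposure across groups, the post-constraint aggregate exposures to $A$ and to $B$ from $C$ are equal and, in the exit outcome, both equal zero. Hence $\sum_{i \in C}\sum_{j \in B} x_{ij}^{\mathrm{PBP}} = 0 < \sum_{i \in C}\sum_{j \in B} x_{ij}^{\mathrm{EG}}$, which is exactly the claimed decrease.

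To finish I would verify that both candidate allocations are genuine tax-subsidy equilibria: for each buyer I would check that the listed bundle maximizes quasi-linear utility at the relevant prices --- paying attention to the quasi-linear corner where keeping budget beats buying a below-price item --- and that every item market clears; for the constrained allocation I would additionally confirm the parity constraints hold and that the subsidies recovered from Proposition \ref{prop:coneg_works} support it. The hard part will be calibrating the valuations, budgets, and supplies so that three requirements hold simultaneously: $B$ is strictly (not merely weakly) the protected group in the unconstrained equilibrium, the claimed allocations are genuine equilibria with consistent market-clearing prices, and the constrained buyers fully exit under parity. The delicate point is the quasi-linear threshold behavior --- prices must sit high enough that buying the disfavored item is unprofitable yet be supported by the unconstrained buyers' demand --- so getting the numerical magnitudes to line up, rather than any conceptual difficulty, is where the work lies.
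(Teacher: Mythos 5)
Your proposal is correct and takes essentially the same route as the paper, whose entire proof is to point at Table \ref{exit_market}: under parity the constrained buyers exit (keeping their budgets beats buying the forced equal-exposure bundle), so exposure of \emph{every} item group to $C$ drops from $0.33$ to $0$. Your extra symmetry-breaking step is not needed for the paper's argument --- since all items lose exposure, whichever group is designated protected loses exposure --- though it is a harmless refinement that makes the protected group strictly, rather than weakly, defined.
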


We see this in Example Table \ref{exit_market}. Adding parity constraints leads constrained buyers to exit the market. This means after the imposition of PBP constraints \textit{all items receive less exposure} to the constrained buyers. However, parity constraints also decrease competition and thus prices paid by unconstrained buyers. 

In the context of job advertising the interpretation of Example \ref{exit_market} is that adding parity requirements to job ads leads to a particular type of parity - nobody receives any job ads at all because job advertisers find it more profitable to keep their budget than participate in the market. The main improvement comes to `regular' advertisers who now face less competition and thus lower prices.

Note that even which group of buyers wins or loses is not guaranteed. For example, all buyers are worse off in the example in Appendix Table \ref{example_everyoneworse_pbfp}.

Shifting focus to envy-free and SPL properties, it is clear that in general there may be envy between a constrained and an unconstrained buyer who share the same valuation function (since the constrained buyer effectively has a strictly smaller choice set). However, within group this is not the case:

\begin{prop}\label{pbfp_envy}
If buyers $i, i'$ are both constrained (or both unconstrained) then $i$ does not have (budget adjusted) envy for $i'$.
\end{prop}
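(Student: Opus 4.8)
The plan is to mimic the standard envy-freeness argument for Fisher equilibria while exploiting the special structure of the PBP price intervention. By \cref{prop:coneg_works}, the intervention implementing the constraints $\sum_{j\in A} x_{ij} = \sum_{j \in B} x_{ij}$ (taking $\alpha=1$) acts only on constrained buyers and, for each $i \in C$, takes the form $\bar p_{ij}^* = \lambda_i$ for $j \in A$ and $\bar p_{ij}^* = -\lambda_i$ for $j \in B$, where $\lambda_i$ is the multiplier on buyer $i$'s parity constraint; unconstrained buyers face $\bar p_i^* = 0$. The key fact I would isolate first is that for \emph{any} bundle $y$ satisfying parity, $\bar p_i^{*\top} y = \lambda_i\big(\sum_{j \in A} y_j - \sum_{j \in B} y_j\big) = 0$. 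In particular, each constrained buyer's equilibrium bundle is a parity bundle, so the intervention contributes zero to its cost and the effective price $(p + \bar p_i^*)^\top x_i$ collapses to the common base-price cost $p^\top x_i$; consequently $\delta_i = B_i - p^\top x_i$.

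This is exactly what makes the claim non-obvious: within the constrained group the two buyers generally face \emph{different} interventions ($\lambda_i \ne \lambda_{i'}$), so the usual envy argument, which relies on both buyers facing a common price vector, does not apply verbatim. The resolution, and the main step, is that on the set of parity-satisfying bundles all constrained buyers effectively transact at the \emph{same} base prices $p$. Concretely, I would take $i, i' \in C$, set $\gamma = B_i/B_{i'}$, and consider the bundle $\gamma x_{i'}$. Since $x_{i'}$ satisfies parity so does $\gamma x_{i'}$, hence $\bar p_i^{*\top}(\gamma x_{i'}) = 0$ and the cost of $\gamma x_{i'}$ to buyer $i$ under their intervention prices is $\gamma p^\top x_{i'}$. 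Using $\bar p_{i'}^{*\top} x_{i'} = 0$, buyer $i'$'s budget constraint gives $p^\top x_{i'} \le B_{i'}$, so $\gamma p^\top x_{i'} \le \gamma B_{i'} = B_i$, i.e.\ $\gamma x_{i'}$ is affordable to buyer $i$.

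Finally I would invoke demand optimality of $x_i$ at prices $p + \bar p_i^*$: since $\gamma x_{i'}$ lies in $i$'s budget set, $v_i(x_i) + \delta_i \ge v_i(\gamma x_{i'}) + \big(B_i - (p + \bar p_i^*)^\top(\gamma x_{i'})\big) = v_i(\gamma x_{i'}) + \gamma\big(B_{i'} - p^\top x_{i'}\big) = v_i(\gamma x_{i'}) + \gamma \delta_{i'}$, where the last equality uses $\delta_{i'} = B_{i'} - p^\top x_{i'}$ (again because $\bar p_{i'}^{*\top}x_{i'} = 0$). This is precisely the budget-adjusted no-envy condition. The unconstrained case is immediate: such buyers face $\bar p^* = 0$, so they transact at the common base prices $p$ and the statement reduces to the standard Fisher envy-freeness argument of \cref{fisher_markets}. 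The same reasoning handles general $\alpha$, since the constraint $\sum_{j \in A} x_{ij} = \alpha \sum_{j \in B} x_{ij}$ is scale-invariant and the corresponding intervention still vanishes on constraint-satisfying bundles.
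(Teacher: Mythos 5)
Your proof is correct and takes essentially the same approach as the paper's: both hinge on the observation that the PBP intervention $\bar p_{ij} = \pm\lambda_i$ contributes zero cost to any parity-satisfying bundle, so buyer $i$ can afford $\gamma x_{i'}$ at the common base prices, and demand optimality then yields the budget-adjusted no-envy inequality. Your explicit treatment of the distinct multipliers $\lambda_i \ne \lambda_{i'}$ and of general $\alpha$ merely spells out details the paper's argument leaves implicit.
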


Finally, and strongly related to the envy-free property, we see that SPL continues to hold: 

\begin{prop}\label{pbfp_spl}
The Fisher market with PBFP constraints and constant fraction $\gamma$ of constrained buyers who cannot misreport whether they are constrained is SPL.
\end{prop}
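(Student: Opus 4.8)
The plan is to adapt the standard strategy-proofness-in-the-large argument for the unconstrained Fisher mechanism (sketched in the Appendix, following \citet{azevedo2018strategy}) to the presence of the per-buyer parity constraints. Recall the skeleton of that argument: in a large market a single buyer is a price-taker, in the sense that the equilibrium base prices $p$ move by only $O(1/n)$ when one buyer unilaterally changes their report, since that buyer's demand is $O(1)$ while each item has supply $s_j n$. Given fixed prices, a buyer's realized bundle is their demand, which already maximizes their true utility, so the only channel through which misreporting can help is by perturbing prices; as that perturbation vanishes, so does the gain. I would establish the result by (i) showing that this price-taking property survives the introduction of the intervention prices $\bar p$, and (ii) showing that, facing fixed prices, truthful reporting remains optimal for a constrained buyer.

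The crux of step (ii), and the one genuinely new ingredient relative to the unconstrained case, is controlling a constrained buyer's own shadow price. By Proposition~\ref{prop:coneg_works}, buyer $i$'s parity constraint $\sum_{j\in A}x_{ij}=\sum_{j\in B}x_{ij}$ contributes an intervention $\bar p_{ij}=\lambda_i$ for $j\in A$ and $\bar p_{ij}=-\lambda_i$ for $j\in B$. The key observation is that, since the center enforces the constraint, buyer $i$'s realized allocation $x_i$ always satisfies parity, so the net intervention payment is $\bar p_i^\top x_i=\lambda_i\big(\sum_{j\in A}x_{ij}-\sum_{j\in B}x_{ij}\big)=0$. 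Consequently the effective budget constraint $(p+\bar p_i)^\top x_i\le B_i$ coincides with the base-price constraint $p^\top x_i\le B_i$ on every parity-feasible bundle, and one checks from the KKT conditions that buyer $i$'s equilibrium bundle is exactly the maximizer of true utility over the set $\{x_i\ge 0 : \sum_{j\in A}x_{ij}=\sum_{j\in B}x_{ij},\ p^\top x_i\le B_i\}$. Since any report buyer $i$ submits yields a bundle that is again parity-feasible and, at approximately fixed $p$, affordable, no misreport can beat the truthful bundle in this set. Thus the whole question reduces to the unconstrained argument, but with the achievable set restricted to the parity-feasible slice, and crucially the buyer's own multiplier $\lambda_i$ adjusting in response to a misreport opens no new avenue, because whatever value it takes the net payment on it is zero.

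It then remains to verify price insensitivity in the large market. The assumption that a constant fraction $\gamma$ of buyers are constrained guarantees that, as $n$ grows, the empirical distribution of types, including the split between constrained and unconstrained buyers, converges to a fixed limit, so the limit economy together with its base prices and constraint multipliers is well defined; the assumption that buyers cannot misreport their constrained status fixes the set $C$, and hence the constraint structure, so that only valuation misreports are in play. A single buyer's deviation perturbs aggregate demand, and therefore the equilibrium base prices and all other buyers' multipliers, by $O(1/n)$, with existence of the perturbed equilibrium at each profile guaranteed by Lemma~\ref{lem:tax_exists}. By continuity of the value of buyer $i$'s utility maximization in the prices, this translates into an $O(1/n)$ change in the best achievable true utility. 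Combining with step (ii), the total gain from any misreport is $O(1/n)$, which is below $\epsilon$ once $n$ exceeds some $\bar n$. The main obstacle is step (ii): ruling out manipulation through the buyer's personal shadow price, which the zero-net-payment identity resolves cleanly, after which the remaining price-continuity estimates are routine adaptations of the unconstrained proof.
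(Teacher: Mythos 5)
Your step (ii) is sound: the identity $\bar p_i^\top x_i = 0$ on parity-feasible bundles, and the consequence that the truthful bundle maximizes true utility over the parity-feasible, base-budget-feasible set, are correct and in fact coincide with the key computation the paper itself makes -- but the paper deploys that identity to prove \emph{within-group envy-freeness} of the PBFP-constrained EG allocation, and then obtains SPL by invoking the theorem of \citet{azevedo2018strategy} that any semi-anonymous, within-group envy-free direct mechanism is SPL (with rate $C n^{-1/2+\epsilon}$). Your proposal instead routes everything through a direct price-taking argument, and that is where the gap lies: the claim in your step (i) that a unilateral misreport moves the equilibrium base prices and the other buyers' multipliers by $O(1/n)$ is asserted as ``routine'' but is never proved, and it is not routine. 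There is no unconstrained price-taking proof in the paper to adapt; the whole point of the envy-freeness-to-SPL reduction is to \emph{avoid} quantitative sensitivity estimates for the equilibrium map, which is not Lipschitz in a single agent's report in any obvious sense (demands and dual variables can jump between regimes). Moreover SPL is a statement about expected utility over random opponent type profiles, so you would need your $O(1/n)$ bound uniformly, or in expectation, over all realizations -- including degenerate ones where prices are highly sensitive -- and your argument gives no handle on those events.

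A second, related omission: with homogeneous valuations the outcome space is unbounded, so in extreme realizations a deviating buyer could receive utility of order $n$; low-probability events can therefore contribute non-vanishing expected gains unless explicitly controlled. The paper has to deal with exactly this in order to apply the Azevedo--Budish theorem (whose setting requires utilities in $[0,1]$): it introduces a truncated ``modified'' mechanism, shows it is still envy-free and hence SPL, and then shows via a Hoeffding bound that the modified and true mechanisms differ with probability exponentially small in $n$. Your $O(1/n)$ accounting never confronts this tail issue. So while your central structural observation (the zero net intervention payment, hence reduction to the parity-feasible slice) matches the paper's, the proof as proposed does not go through without supplying the missing price-stability lemma and the control of unbounded payoffs -- precisely the two burdens the paper's envy-freeness route was chosen to discharge.
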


We relegate proofs of both propositions to the Appendix. 

\subsection{Per Item Parity}
We now turn to the `transpose' of the PBP constraint. \citet{geyik2019fairness} consider the case of job candidate search and the goal of creating a balanced slate of candidates in any search query. 

The market version of such a constraint is that items have a balanced exposure to different types of buyers. For example, this can be used to balance the exposure of content consumers on certain types of content producers \citep{singh2018fairness,geyik2019fairness} or making sure that the impressions of ads of a certain buyer type are evenly distributed across groups of individuals.

We refer to such constraints as per item parity constraints (PIP). 

Formally, buyers are one of two types, either $A$ or $B$. There is a subset of items which are constrained which we denote by $C$. Formally the constraint is given by $$\forall j \in C, \text { } \sum_{i \in A} x_{ij} = \alpha \sum_{i \in B} x_{ij}.$$ As above this can always be expanded to different item groups $C_1, C_2, \dots$ with their own values of $\alpha$. 

As before, there are many choices of $\alpha$. For simplicity, we again focus on parity of exposure, i.e. $\alpha = \frac{N_A}{N_B}$. Note though, that various other choices of $\alpha$ in the buyer parity case often have natural equivalents in the item parity case. 

Given an allocation $X$, we call the buyer group $G$ which has less exposure to items $C$ the \textbf{disadvantaged buyer group}. With this in mind, we now focus on the welfare consequences of implementing per item fractional parity constraints. 

\begin{prop}
PIP constraints do not lead to buyer-only Pareto optimal allocations. Originally disadvantaged buyers may be worse off after constraints are implemented.
\end{prop}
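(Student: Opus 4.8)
Both assertions are existential—PIP ``does not lead to'' buyer-only Pareto optimality, and originally disadvantaged buyers ``may be'' worse off—so the plan is to establish them with a single explicit counterexample, exactly as the analogous per-buyer claims (Propositions~\ref{pbfp_pareto} and~\ref{pbfp_decrease}) are handled by example. I would design one small linear market that simultaneously witnesses both failures.

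First I would fix two buyers, one of type $A$ and one of type $B$, and two items, with only item $1$ constrained ($C=\{1\}$); with one buyer per group and $\alpha=1$ the PIP constraint collapses to $x_{A1}=x_{B1}$, i.e. the two buyers must split item $1$ equally. I would choose valuations so that buyer $A$ ranks the constrained item first yet retains a substantial value for item $2$, while buyer $B$ strongly prefers item $2$ (a convenient choice is $v_A=(10,8)$, $v_B=(1,10)$ with equal budgets). Solving the unconstrained Eisenberg--Gale program then gives $A$ the whole of item $1$ and $B$ the whole of item $2$, so $B$ is the \textbf{disadvantaged buyer group} on the constrained item---exactly the group the intervention is meant to help.

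Next I would solve the PIP-constrained program~\ref{eq:con_eg} and read off the induced tax/subsidy from Proposition~\ref{prop:coneg_works}: the multiplier on $x_{A1}-x_{B1}=0$ taxes $A$'s purchases of item $1$ and subsidizes $B$'s, forcing the $50/50$ split. The effect I want to expose is a pecuniary externality---being taxed off item $1$, buyer $A$ substitutes into the unconstrained item $2$, which raises the equilibrium price of item $2$. Since $B$ draws almost all of its utility from item $2$, this crowding-out lowers $B$'s consumption there by more than the value of the extra item-$1$ exposure it gains; for the numbers above one checks that $u_B$ falls from $10$ to roughly $8.4$, establishing the ``worse off'' claim.

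For the Pareto claim I would use the same equilibrium and exhibit a buyer-only dominating allocation, the natural candidate being the unconstrained allocation itself: it gives both buyers strictly higher quasi-linear utility ($10$ versus roughly $6.7$ and $8.4$) while leaving $\sum_i \delta_i$---hence the seller---no worse off, so the constrained allocation fails the QL Pareto test of Section~\ref{fisher_markets}. The main obstacle is that PIP is not a clean transpose of PBP, since budgets and utilities attach to buyers rather than to items; I therefore cannot merely relabel a per-buyer example and must tune the valuations and budgets so that the second-order price increase on the unconstrained item strictly dominates the first-order exposure gain for the disadvantaged group, and then verify that the dominating allocation respects the full QL Pareto conditions, including seller revenue.
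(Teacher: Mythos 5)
Your proposal is correct and takes essentially the same approach as the paper: the paper also proves both claims with a single small two-buyer, two-item counterexample (Table~\ref{example_everyoneworse_pifp}, valuations $(2,2)$ and $(.1,3)$) in which the PIP constraint forces an equal split of the constrained item, the originally disadvantaged buyer crowds into the unconstrained item, and both buyers end up strictly worse off while the seller's revenue is unchanged, so the unconstrained allocation Pareto dominates. Your specific numbers check out ($u_A\colon 10 \to 6.7$, $u_B\colon 10 \to 8.375$, full budget spend in both equilibria), so your example is a valid alternative witness for the identical argument.
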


To put this result into context, consider the use of an ad market or market-based mechanism like CEEI for recruiters. Here each item is a recruiter impression and each buyer is a job applicant. This means we take a `job applicant'-centric perspective, i.e., we focus on maximizing the applicant's goals with a Nash social welfare criterion as a distributional goal. On the other hand, to avoid discrimination on the recruiter side, we enforce parity of exposure to each recruiter exposure across some binary attribute of the applicants. Then, it is possible that the job applicants we are trying to help are worse off after this `anti-discrimination' intervention.

It is easy to construct an example where all buyers are worse off under PIP by simply considering that PIP can force an equal split of items which clearly can be Pareto dominated by many allocations (see Table \ref{example_everyoneworse_pifp} in the Appendix).

\begin{table*}%
\begin{tiny}
  \centering
  \subfloat{ \begin{tabular}{lcr}
    \toprule
    V &   &   \\
    \midrule
    Buyer & Item $C$ & Item $U$ \\
    $A_1$ & 2 & 1 \\ 
    $A_2$ & 2 & 1.5 \\ 
    $B_1$ & 3 & 2 \\
    $B_2$ & 3 & 2 \\
    \bottomrule 
  \end{tabular}}%
  \qquad
  \subfloat{ \begin{tabular}{lcrrr}
    \toprule
    $X^{EG}$ &   &  &  & \\
    \midrule
    Buyer & Item $C$ & Item $U$ & $\delta_i$ & $u_i$ \\
    $A_1$ & .167 & 0 & .66 & 1 \\ 
    $A_2$ & 0 & .75 & 0 & 1.12 \\ 
    $B_1$ & .417 & .125 & 0 & 1.5\\
    $B_2$ & .417 & .125 & 0 & 1.5\\
    \bottomrule 
  \end{tabular}}
  \subfloat{ \begin{tabular}{lcrrr}
    \toprule
    $X^{PIP}$ \text{ or} &  $X^{AEF}$   &  &  & \\
    \midrule
    Buyer & Item $C$ & Item $U$ & $\delta_i$ & $u_i$ \\
    $A_1$ & .382 & 0 & .24 & 1 \\ 
    $A_2$ & .118 & .42 & .14 & 1\\ 
    $B_1$ & .25 & .29 & 0 & 1.33 \\
    $B_2$ & .25 & .29 & 0 & 1.33 \\
    \bottomrule 
  \end{tabular}}
  \caption{An example of a set of valuations with budgets set to $1$ where all buyers prefer $C$ to $U$ and $A$ buyers are disadvantaged originally. However, adding PIP constraints fails to improve their utility and decreases the utility of $A_2$. Note that this is equivalent an AEF constraint of requiring $.5$ exposure of $C$ to $A$ buyers.}%
  \label{example_crowd_out_pifp}%
  \end{tiny}
\end{table*}

However, PIP can backfire in a different way as we see in Example Table \ref{example_crowd_out_pifp}. Here \textit{all} buyers prefer item $C$ to item $U$, and $A$ buyers are less exposed to it than $B$ buyers in equilibrium. Implementing PIP fails to help $A$ buyers. While it increases exposure of $A_1$ and $A_2$ buyers to $C$ it strictly decreases the welfare of buyer $A_2$ because of increased competition from $B$ buyers for item $U$.

As with the PBFP constraints we see that envy free and SPL properties continue to hold.

\begin{prop}
If buyers $i, i'$ are both the same binary label then $i$ does not have (budget adjusted) envy for $i'$ in PIP constrained equilibrium.
\end{prop}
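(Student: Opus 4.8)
The plan is to show that every pair of equally-labelled buyers faces an \emph{identical} effective price vector, and then invoke the same affordability argument that establishes envy-freeness in the unconstrained Fisher market. The whole proposition reduces to one structural observation about the interventions produced by \Cref{prop:coneg_works}, after which the quasi-linear envy computation is routine.

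First I would write out the intervention explicitly. The PIP constraints are the equality constraints $\sum_{i\in A} x_{ij} - \alpha\sum_{i\in B} x_{ij} = 0$, one for each $j\in C$; index this family by $k$, with constraint $k$ attached to item $j_k\in C$. The corresponding operator entries $A_{2kij}$ vanish unless $j=j_k$, in which case they equal $+1$ for $i\in A$ and $-\alpha$ for $i\in B$. Substituting into the formula of \Cref{prop:coneg_works}, $\bar p^*_{ij}=\sum_k A_{2kij}\lambda^*_{2k}$, only the single constraint attached to item $j$ survives, so for $j\in C$ we obtain $\bar p^*_{ij}=\lambda^*_j$ when $i\in A$ and $\bar p^*_{ij}=-\alpha\lambda^*_j$ when $i\in B$ (writing $\lambda^*_j$ for the multiplier on item $j$'s constraint), while $\bar p^*_{ij}=0$ for $j\notin C$. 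The key point is that $\bar p^*_{ij}$ depends on $i$ only through its label. Hence if $i,i'$ carry the same label, $\bar p^*_i=\bar p^*_{i'}$, and since both see the same base prices $p$, they face one common effective price vector $q:=p+\bar p^*_i=p+\bar p^*_{i'}$.

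With identical prices in hand I would run the standard argument. By \Cref{prop:coneg_works} the PIP equilibrium is a tax-subsidy equilibrium, so $x_i\in D_i(q)$; that is, $x_i$ maximizes $v_i(x_i')+\delta_i'$ with $\delta_i'=B_i-q^\top x_i'$ over all budget-feasible $x_i'\ge 0$, and likewise $\delta_{i'}=B_{i'}-q^\top x_{i'}$. Put $\gamma=B_i/B_{i'}$ and feed $i$ the scaled bundle $\gamma x_{i'}$. Its cost to $i$ is $q^\top(\gamma x_{i'})=\gamma(B_{i'}-\delta_{i'})=B_i-\gamma\delta_{i'}\le B_i$, so it is affordable and leaves leftover exactly $\gamma\delta_{i'}\ge 0$; hence its quasi-linear value to $i$ is $v_i(\gamma x_{i'})+\gamma\delta_{i'}$. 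Since $x_i$ is $i$'s utility-maximizing demand at prices $q$, we get $v_i(x_i)+\delta_i\ge v_i(\gamma x_{i'})+\gamma\delta_{i'}$, which is precisely the budget-adjusted no-envy condition.

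The one genuinely load-bearing step is the structural observation that PIP interventions are indexed by item and buyer-label alone; this is what collapses the two buyers' budget sets onto each other and is special to the item-parity form of the constraint. For per-buyer parity the multipliers differ across same-label buyers, and one would instead have to exploit homogeneity of the constraint so that the intervention contributes zero cost on any parity-satisfying bundle; here no such detour is needed. Everything after the price-equality observation is routine quasi-linear bookkeeping, the only mild care being the leftover-budget accounting $\delta_{i'}=B_{i'}-q^\top x_{i'}$ that makes the scaled bundle's value come out to $v_i(\gamma x_{i'})+\gamma\delta_{i'}$.
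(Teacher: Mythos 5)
Your proof is correct and takes essentially the same approach as the paper's: you establish that same-label buyers face an identical personalized price vector (because the PIP multipliers enter $\bar p_{ij}$ only through the item and the group label), and then run the standard affordability/demand-optimality argument to get the budget-adjusted no-envy inequality $v_i(x_i)+\delta_i \geq v_i(\gamma x_{i'})+\gamma\delta_{i'}$. The paper's version is merely terser, stating the price-equality observation and invoking a general affordability fact rather than spelling out the intervention entries and the leftover-budget arithmetic as you do.
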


\begin{prop}
The Fisher market with PIP constraints and a constant fraction $\gamma$ of buyers as $A$ buyers where no buyers can misreport their group affiliation is SPL.
\end{prop}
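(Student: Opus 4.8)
The plan is to show that, from the viewpoint of any single buyer, the PIP-constrained mechanism behaves like an ordinary Fisher mechanism in which that buyer is a price-taker facing a group-specific effective price vector, and then to invoke the same large-market price-insensitivity argument that makes the unconstrained Fisher mechanism SPL. First I would use \cref{prop:coneg_works} to write the PIP-constrained equilibrium as a tax-subsidy equilibrium in which each buyer $i$ faces effective prices $\hat p_{ij} = p_j + \bar p_{ij}$, where $\bar p_{ij}$ is assembled from the Lagrange multipliers on the parity constraints. The key structural observation is that the PIP coefficient for a constrained item $j \in C$ equals $+1$ for every $A$-buyer and $-\alpha$ for every $B$-buyer, so $\bar p_{ij}$ depends only on the buyer's group label and the item, not on the buyer's identity or reported valuation. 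Hence there are effectively only two price vectors, $\hat p^A$ and $\hat p^B$, and each buyer simply receives their demand $x_i \in D_i(\hat p^{g(i)})$ at the price vector for their group $g(i)$. This is exactly where the hypothesis that group affiliation cannot be misreported is used: a buyer may perturb only their reported valuation, not which price vector they face.

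Given this reduction, the price-taking step is immediate. At fixed effective prices, a buyer's equilibrium allocation is by definition a maximizer of their true quasi-linear utility over the budget set (condition 1 of tax-subsidy equilibrium). Any misreport $v_i'$ returns the demand computed under $v_i'$, which is still budget-feasible for the true problem but need not maximize the true utility; therefore, holding $\hat p^{g(i)}$ fixed, truthful reporting is weakly optimal. Consequently the \emph{entire} gain from misreporting comes from whatever movement in $\hat p^{g(i)}$ the report induces, a conclusion consistent with the within-group envy-freeness already established and parallel to the argument behind \cref{pbfp_spl}.

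Next I would bound this price movement. With $n$ unit-budget buyers, a constant fraction $\gamma n$ of type $A$, and supplies scaling as $s_j n$, a single buyer's report perturbs one summand of the objective of the constrained program \cref{eq:con_eg}, an $O(1)$ change against objective and constraint data of size $\Theta(n)$; in the normalized per-capita program this is an $O(1/n)$ perturbation. By stability of the primal-dual optimum of the convex program, both the base prices $p$ (multipliers on the supply constraints) and the parity multipliers $\lambda$ --- hence $\hat p^A$ and $\hat p^B$ --- change by a vanishing amount as $n \to \infty$. The gain from misreporting is then at most this price perturbation times the buyer's spending, which is bounded by the unit budget, so the gain tends to $0$; making the per-capita bound quantitative yields, for each $\epsilon$, the threshold $\bar n$ required by the SPL definition.

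The main obstacle is the dual-stability claim. Insensitivity of the base prices to one agent is the standard large-market Fisher argument (a limit-equilibrium / uniform-convergence argument as in \citep{azevedo2018strategy}), but here I additionally need the parity multipliers $\lambda$ to be bounded and to vary continuously under an $O(1/n)$ perturbation of the data. Boundedness is available because the parity constraint on any $j \in C$ can always be met with both sides equal to zero, so no multiplier is forced to blow up; continuity requires a constraint-qualification / non-degeneracy condition --- precisely the ``no redundant constraints'' assumption invoked for OPIC --- to pin down well-behaved duals. I would therefore carry out the argument under that non-degeneracy assumption and push the perturbation bound through to the SPL conclusion.
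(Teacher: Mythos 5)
There is a genuine gap, and it sits exactly where you flagged it: the dual-stability step. Your argument reduces the misreporting gain to the movement of the effective price vectors $\hat p^A,\hat p^B$ induced by one report, and then asserts that an $O(1/n)$ perturbation of the constrained EG program moves both the supply-constraint multipliers and the parity multipliers by $o(1)$, uniformly over all misreports $v_i'$, all type profiles, and all market sizes. Nothing in the paper's setting delivers this. The ``no redundant constraints'' assumption is used only to make the limit point of OPIC well defined; it is not a constraint qualification yielding continuous (let alone Lipschitz) dependence of dual variables on problem data. EG-type programs are not strongly convex (with linear utilities the optimal allocation need not even be unique), and dual solutions of convex programs are in general set-valued and can jump under arbitrarily small data perturbations. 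Obtaining a uniform, quantitative price-impact bound is precisely the hard part of any ``price-taking'' route to SPL, and it is exactly what the machinery of \citet{azevedo2018strategy} is designed to bypass. There is also a secondary quantitative slip: bounding the gain by ``price perturbation times spending, which is bounded by the unit budget'' does not work, because the relevant bound is $\sum_j |\Delta \hat p_{ij}|\, x_{ij}$ and purchased quantities are not controlled by the budget when prices are near zero; supplies scale like $s_j n$, so a buyer can hold very large quantities of cheap items.

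The paper's proof never bounds price impact at all. It proves within-group envy-freeness of the PIP-constrained EG mechanism --- using the same structural fact you identified, that every buyer in group $A$ faces prices $p_j+\lambda_j$ and every buyer in group $B$ faces $p_j - \alpha\lambda_j$ --- notes that the mechanism is semi-anonymous (groups $A$/$B$, with group membership unreportable), and then invokes the Azevedo--Budish theorem that semi-anonymous mechanisms which are envy-free within each group are SPL, with an explicit $C\, n^{-1/2+\epsilon}$ rate. The remaining technical content in the paper is not stability but a truncation device: homogeneous valuations violate the bounded-utility hypothesis of that theorem, so the paper runs the argument for a modified mechanism with a cutoff $k$ and shows via Hoeffding's inequality that the modified and true mechanisms differ with probability exponentially small in $n$. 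To repair your proof you would either have to actually establish the uniform dual-stability estimate (a substantial and, in this generality, dubious undertaking), or notice that your own structural observation already gives within-group envy-freeness, at which point the envy-freeness $\Rightarrow$ SPL theorem finishes the job without any price-impact analysis.
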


The full proofs of these propositions are in the Appendix.

\subsection{Aggregate Exposure Floor Constraints}
We now turn to studying an extremely simple constraint: we set a floor below which aggregate exposure of some buyers to some items cannot fall. We let $C$ be a subset of constrained buyers and $P$ a subset of protected items. 

The \textbf{aggregate exposure floor constraint} (AEF) is written as $$\sum_{i \in \mathcal{C}} \sum_{j \in P} x_{ij} \geq L$$ where $L$ is our floor level. We require that $L$ is jointly feasible with the supply constraints. 

We see that here at least some form of Pareto optimality is preserved:

\begin{prop}\label{AEFPO}
Adding AEF constraints to EG guarantees buyer-protected-item Pareto optimality. However, it does not guarantee buyer-only Pareto optimality. 
\end{prop}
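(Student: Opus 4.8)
The plan is to prove the positive claim directly from optimality of the constrained EG program \ref{eq:con_eg}, treating the three families of objectives (individual buyer utilities, seller welfare, and aggregate exposure of $P$ to $C$) uniformly, and to establish the negative claim with a small explicit crowding-out market.

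For the positive claim, let $(x^*,\delta^*)$ solve \ref{eq:con_eg} with the single AEF inequality rewritten as $-\sum_{i\in C}\sum_{j\in P}x_{ij}\le -L$, and abbreviate $u_i=v_i(x_i)+\delta_i$ and the exposure $E=\sum_{i\in C}\sum_{j\in P}x_{ij}$. I would argue by contradiction: suppose $(x',\delta')$ is supply-feasible and is a buyer-protected-item Pareto improvement, i.e. $u_i'\ge u_i^*$ for every buyer, $\sum_i\delta_i'\le\sum_i\delta_i^*$ (seller weakly better), and $E'\ge E^*$, with at least one inequality strict. The key observation is that such an improvement is automatically feasible for \ref{eq:con_eg}: since $x^*$ satisfies AEF we have $E^*\ge L$, hence $E'\ge E^*\ge L$, so $(x',\delta')$ also obeys the AEF constraint. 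Comparing objective values, strict monotonicity of $\log$ together with $B_i>0$ gives $\sum_i B_i\log u_i'\ge \sum_i B_i\log u_i^*$, while $\sum_i\delta_i'\le\sum_i\delta_i^*$ gives $-\sum_i\delta_i'\ge -\sum_i\delta_i^*$; adding, the EG objective at $(x',\delta')$ is at least its value at $(x^*,\delta^*)$.

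If the strict improvement lies in some buyer's utility or in seller welfare, the corresponding inequality above is strict, so $(x',\delta')$ is feasible yet strictly beats $(x^*,\delta^*)$ on the EG objective, contradicting optimality. The only surviving case is a Pareto improvement that is strict \emph{solely} in exposure ($u_i'=u_i^*$ for all $i$ and $\sum_i\delta_i'=\sum_i\delta_i^*$ but $E'>E^*$), in which the objective values merely tie and optimality alone yields no contradiction. This degenerate exposure-only tie is the main obstacle, and I would close it by fixing the selection rule for $x^*$: among all optimizers of \ref{eq:con_eg}, a convex set on which every buyer utility and the total leftover budget are constant (recall equilibrium utilities are pinned down), choose one that maximizes the linear functional $E$. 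Such a maximizer exists, is still an EG optimizer, and by construction admits no same-utility, same-seller allocation of strictly larger exposure, eliminating the last case. Equivalently, one may note that when the AEF constraint is active with multiplier $\lambda_1^*>0$ the Lagrangian $\sum_i B_i\log u_i-\sum_i\delta_i+\lambda_1^* E$ is a strictly-positively-weighted scalarization of all three objective families, so maximizing it forces Pareto optimality; the tie-break above is exactly what handles the boundary $\lambda_1^*=0$.

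For the negative claim I would exhibit a two-buyer, two-item instance. Place buyer $1\in C$ and buyer $2\notin C$, item $j_1\in P$ and $j_2\notin P$, and pick linear valuations so that buyer $1$ strongly prefers the non-protected item $j_2$ while buyer $2$ strongly prefers $j_1$. With $L>0$ the AEF constraint forces buyer $1$ to hold a positive amount of the low-value protected item $j_1$, which simultaneously crowds buyer $2$ off $j_1$. The alternative allocation that instead assigns $j_2$ to buyer $1$ and $j_1$ to buyer $2$ raises both buyers' utilities (with budgets and prices chosen so the seller is not made worse off) but violates AEF; it therefore buyer-only Pareto dominates the constrained equilibrium, so buyer-only Pareto optimality fails. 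I expect the positive direction, and specifically the exposure-only tie, to be the delicate point, whereas the negative direction is a routine crowding-out construction of the same flavor as the per-item examples already tabulated.
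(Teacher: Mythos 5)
Your proposal is correct and its core is the same mechanism the paper uses, but the two arguments are organized differently and yours is in one respect more careful. The paper splits on whether adding the floor changes the EG objective value: if the objective is unchanged, it argues the constrained solution has the same utilities as the unconstrained equilibrium and ``inherits'' its Pareto optimality; if the objective strictly drops, the multiplier $\lambda$ on the floor is strictly positive and the constrained solution maximizes the Lagrangian $\sum_i B_i\log u_i(x_i,\delta_i)-\sum_i\delta_i+\lambda\sum_{i\in C}\sum_{j\in P}x_{ij}$ over supply-feasible allocations, so any weak improvement in all of (buyer utilities, seller welfare, exposure) with one strict coordinate would strictly increase the Lagrangian --- exactly your ``strictly-positively-weighted scalarization'' remark. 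You instead split on \emph{which coordinate of the improvement is strict}: strict buyer or seller gains contradict optimality of the constrained program directly (your feasibility observation $E'\geq E^*\geq L$ is the right key step, and it works whether or not $\lambda>0$), leaving only the exposure-only tie. Here you identified a genuine soft spot: the paper's first case only inherits \emph{buyer-only} Pareto optimality from the unconstrained solution and never rules out a same-utility, same-revenue allocation with strictly larger exposure (e.g., when all buyers value both item groups identically and the floor's multiplier is zero, an optimizer of the constrained program can admit a pure exposure increase). Your tie-breaking selection --- maximize the linear functional $E$ over the compact convex set of optimizers --- closes this, at the cost of proving the claim for a \emph{selected} optimizer rather than for every optimizer, which is arguably the honest reading of the proposition in degenerate cases. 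For the negative claim, your crowding-out construction is precisely the paper's counterexample (its Table~\ref{example_pareto_aef} uses the extreme version: the constrained buyer values the protected item at $0$, so the floor forces it to hold mass it does not value, and handing that mass to the other buyer strictly helps that buyer while leaving everyone else, the seller, but not the exposure, unchanged); to finish, you should instantiate your sketch with such explicit numbers and verify the constrained equilibrium allocation, but this is routine as you anticipated.
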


The proof and counter-example for this proposition is can be found in the Appendix. 

The AEF constraint can be used as an alternative to both PBP and PIP constraints above. The goal of the floor is to increase exposure of buyers $C$ to items $P$. In the context of ads this can be motivated by improving welfare for $P$ (showing more of some ad type to some users increases user utility) \textbf{or} by improving welfare for $C$ (subsidizing some ad type bids for some users is to help those advertisers).

Proposition \ref{AEFPO} shows that it is more natural to wield the AEF constraint when the market designer's goal is some form of `supply-side improvement' than when it is some `buyer-side improvement.' 

An example of a case where AEF may be used for `buyer-side' improvements is the job applicant setting the designer may want to increase exposure of some applicant group (buyers) to some group of recruiters (items). 

The AEF constraint is implemented via pricing by a blanket subsidy of $\bar{p}$ for all buyers in $C$ for all items in $P$. It seems intuitive that subsidizing buyers can only improve their utility, but this is actually incorrect:

\begin{prop}
Adding AEF constraints can decrease the utility of buyers in $C$.
\end{prop}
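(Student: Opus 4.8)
Since the statement only claims that a constrained buyer's utility \emph{can} fall, it suffices to exhibit a single market and floor level for which the AEF-constrained equilibrium leaves some buyer in $C$ strictly worse off. The plan is to reuse the market of Table~\ref{example_crowd_out_pifp}, reading its two ``$A$''-buyers as the constrained set $C$ and the universally preferred item as the protected set $P$, with floor $L=0.5$. (The notation there swaps the roles of the letters relative to the present proposition---$C$ labels an item and the $A$'s are the constrained buyers---but the caption already records that the per-item constraint imposed there coincides with exactly this AEF constraint.) In that market buyer $A_2$ enjoys utility $1.12$ in the unconstrained equilibrium and only $1$ after the floor is imposed, which is the desired strict decrease.

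To make this rigorous I would proceed in three steps. First, compute the unconstrained EG equilibrium---prices, allocation, and leftover budgets---and verify that the aggregate protected-item exposure of $C$ falls short of $L$, so that the floor genuinely binds. Second, appeal to Proposition~\ref{prop:coneg_works}: solving the constrained program~\ref{eq:con_eg} produces an optimal allocation together with a single nonnegative multiplier on the floor inequality, which, passed through the (very simple) constraint operator, becomes one blanket subsidy $\bar p<0$ that every buyer in $C$ faces on $P$---matching the implementation discussed after Proposition~\ref{AEFPO}. Third, confirm that the constrained allocation, the induced base prices, and this $\bar p$ jointly form a tax-subsidy equilibrium; for linear utilities this is just the check that each buyer's purchases maximize the ratio $v_{ij}/(p_j+\bar p_{ij})$ and that markets clear. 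Reading off $A_2$'s utility then completes the comparison.

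The instructive part is the \emph{mechanism}, which I would foreground alongside the arithmetic. The subsidy on the protected item is, for $A_2$, worthless at the margin: in the unconstrained equilibrium $A_2$'s value-to-price ratio on the protected item already sits at the indifference level $1$, so cheapening it merely keeps that ratio at $1$. All of $A_2$'s original surplus instead comes from a favorable ratio on the \emph{non}-protected item. Forcing extra protected-item exposure onto the constrained buyers displaces the unconstrained buyers from the protected item and onto the non-protected item, and this additional demand raises the non-protected item's base price. That price increase drags $A_2$'s only source of surplus down to the indifference level, so its equilibrium utility falls from its budget times the old favorable ratio down to just its budget.

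The main obstacle is arranging the valuations so that this indirect, price-mediated harm strictly outweighs the direct, first-order benefit that a subsidy confers on its recipients. Concretely one needs (i) a constrained buyer whose surplus is extracted almost entirely through an item \emph{outside} $P$, so that it is exposed to the induced price rise rather than shielded by its own subsidy, and (ii) a floor that binds but only mildly, so the constrained optimum is an interior reshuffling rather than a degenerate corner. Once the market is pinned down, all verification reduces to a finite check of the ratio and market-clearing conditions; the single delicate point is confirming that the claimed constrained allocation is genuinely demand-optimal for every buyer under the induced prices, which I would verify through those ratio conditions rather than by re-solving the equilibrium from first principles.
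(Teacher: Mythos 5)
Your proposal is correct and takes essentially the same route as the paper: the paper's own proof of this proposition simply invokes the market of Table~\ref{example_crowd_out_pifp}, noting that the PIP-constrained equilibrium there is also achievable via an AEF floor of $0.5$ on the constrained buyers' exposure to the preferred item, and that buyer $A_2$'s utility consequently drops from $1.12$ to $1$ because of increased competition from the $B$ buyers for the non-protected item. Your verification plan and price-mechanism account (the subsidy being marginally worthless to $A_2$ while the displaced demand raises the price of the non-protected item) are a more detailed but faithful elaboration of exactly that argument.
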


This follows from the fact that AEF constraints set with the right threshold can simulate PIP constraints. The PIP constrained equilibrium in Table \ref{example_crowd_out_pifp} is also achievable with AEF constraints. Again, adding AEF constraints helps buyer $C_1$ get more of item $P$ but buyer $C_2$ prefers the non-protected item and, after implementation of AEF constraints, faces increased competition from buyers outside of the $C$ set.

Thus, using AEF constraints purely as a way to improve buyer outcomes can sometimes backfire.

As with other PIP, PBP we find similar envy/SPL results:

\begin{prop}
The Fisher market with AEF constraints is SPL when no buyers can misreport their group affiliation is per-group envy free and SPL.
\end{prop}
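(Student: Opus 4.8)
The plan is to treat the statement as its two claims—per-group envy-freeness and strategy-proofness in the large—and to handle each by reducing to the corresponding unconstrained Fisher argument via the key observation that the AEF intervention is a single \emph{blanket} subsidy. By Proposition~\ref{prop:coneg_works} the single constraint $\sum_{i\in C}\sum_{j\in P}x_{ij}\ge L$ has a single nonnegative multiplier $\lambda^*$, so the induced intervention is $\bar p_{ij}=-\lambda^*$ for $i\in C,\,j\in P$ and $\bar p_{ij}=0$ otherwise. The decisive structural fact is therefore that the effective price vector $q_i=p+\bar p_i$ depends on $i$ only through $i$'s group membership: any two buyers both in $C$ (or both outside $C$) face \emph{identical} effective prices.

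For envy-freeness, I would fix same-group buyers $i,i'$, let $q$ denote their common effective price vector and set $\gamma=B_i/B_{i'}$. By condition~1 of the tax-subsidy equilibrium, $x_i$ maximizes $v_i(x_i)+\big(B_i-q^\top x_i\big)$ over $\{x_i\ge 0: q^\top x_i\le B_i\}$. First I would check that the scaled bundle $\gamma x_{i'}$ is affordable to $i$ at $q$: its cost is $\gamma q^\top x_{i'}=\gamma(B_{i'}-\delta_{i'})=B_i-\gamma\delta_{i'}\le B_i$, and its leftover is $B_i-q^\top(\gamma x_{i'})=\gamma\delta_{i'}$, so the utility $i$ would obtain from it is exactly $v_i(\gamma x_{i'})+\gamma\delta_{i'}$. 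Since $i$ optimizes over its budget set and $\gamma x_{i'}$ lies in it, $v_i(x_i)+\delta_i\ge v_i(\gamma x_{i'})+\gamma\delta_{i'}$, which is precisely budget-adjusted per-group envy-freeness; this mirrors Proposition~\ref{pbfp_envy}, the only new point being that the blanket subsidy keeps effective prices uniform within a group.

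For SPL, I would follow the large-market template behind Proposition~\ref{pbfp_spl} and \citet{azevedo2018strategy}. Because group affiliation is assumed non-manipulable, a buyer's sole lever is the reported valuation, which influences their allocation only through the equilibrium effective prices—namely the base prices $p$ and the single AEF multiplier $\lambda$. The core estimate is that in the $n$-buyer market (budgets $1$, supplies $s_j n$, and a constant fraction of $C$-buyers) a single buyer's report perturbs these quantities by $O(1/n)$: one buyer contributes an $O(1/n)$ fraction of aggregate demand and of the $\Theta(n)$-term AEF constraint, so both $p$ and the dual $\lambda$ move by $o(1)$ as $n\to\infty$. Holding effective prices (nearly) fixed at a limit $q^*$, truthful reporting delivers the buyer's true demand $D_i(q^*)$, which is utility-maximal among $q^*$-affordable bundles; any misreport $v_i'$ returns a bundle that is still affordable at the (nearly identical) deviated prices, so its true utility cannot exceed the truthful value by more than $o(1)$. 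Taking $n$ large makes this gain below any $\epsilon$, giving SPL.

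The main obstacle is the price-stability step in the SPL argument, and specifically the behavior of the \emph{endogenous subsidy} $\lambda$: unlike the unconstrained Fisher case, one of the quantities a buyer might hope to manipulate is now a dual multiplier rather than a primal price. The argument closes because $C$ is a constant fraction of the buyers and the AEF constraint aggregates $\Theta(n)$ allocation terms, so a single buyer's influence on the constraint value, and hence on $\lambda$, is $O(1/n)$. Making this continuity-of-the-dual claim rigorous—e.g.\ via stability of the constrained EG program's optimal multipliers under a one-buyer perturbation, using non-redundancy of the constraint—is where the real work lies, after which the standard price-taking conclusion applies verbatim.
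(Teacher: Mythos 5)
Your envy-freeness half is correct and is essentially the paper's own argument: the single AEF multiplier induces a blanket subsidy, so any two buyers in the same group face identical effective prices, the scaled bundle $\gamma x_{i'}$ is affordable with leftover $\gamma\delta_{i'}$, and demand-set optimality gives budget-adjusted no-envy (this is the paper's Fact on affordability plus its per-group AEF proposition).

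The SPL half, however, has a genuine gap, and it is exactly the step you flag as ``where the real work lies.'' Your plan rests on a stability claim: that one buyer's misreport moves the equilibrium prices $p$ and the AEF dual $\lambda$ by $O(1/n)$. This is a deterministic, worst-case perturbation statement about the constrained EG program, and it is not established anywhere in your sketch; it is also not the right target. SPL in the sense of \citet{azevedo2018strategy} is an \emph{interim} notion -- utility is averaged over opponents' reports drawn from a full-support distribution $\mu$ -- precisely because pointwise price-taking can fail: for particular realizations of opponents' types a single report can be pivotal for prices or for whether the floor constraint binds, so no uniform $O(1/n)$ Lipschitz bound on $(p,\lambda)$ is available without further assumptions. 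The paper sidesteps this entirely by using the Azevedo--Budish theorem: any semi-anonymous direct mechanism that is envy-free within each group is SPL. In other words, the envy-freeness you proved in your first paragraph is not just an independent claim -- it is the engine of the SPL proof, and your proposal never uses it that way.

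There is also a second issue your route would inherit even if the stability step were repaired. The Azevedo--Budish framework (and any argument bounding gains by ``small price movement times bounded utility'') requires utilities bounded in $[0,1]$, which fails here because valuations are homogeneous on $\mathbb{R}^m_+$: rare realizations can hand a deviator an allocation of size $\Theta(n)$, so an $o(1)$-probability event can still carry an $\Theta(n)$ utility gain. The paper handles this with a truncated ``modified EG-AEF'' mechanism (throw away the allocation if any agent would receive more than $k$ units), observes the truncated mechanism is still envy-free hence SPL, and then shows via Hoeffding's inequality that the truncation binds with probability exponentially small in $n$, so the correction term $(n\bar v+1)\Pr(x^*\neq x^{\mathrm{mod}})$ vanishes. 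Your sketch would need an analogous tail-control argument; bounding the typical price movement alone does not bound the expected gain from deviation.
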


Again, we relegate the proofs to the Appendix. Note that our SPL results require that the size of the constrained group grows to infinity as the market gets large - this means that the subsidy is done at the buyer/item group level and is, in the infinite limit, not affected by the report of an individual buyer. We cannot in general have SPL results for individualized floor constraint per buyer since this would imply individualized subsidies and would incentivize buyers to under-report their valuation in order to increase their subsidy.

\subsection{Summary of Fairness Constraints}
In Table \ref{table:summary_con} we summarize the discussion of each constraint. In addition, we provide the form that the price intervention from Section \ref{sec:changing market outcomes} takes for that constraint. From this it is easy to see the OPIC update rule.

%
%

\begin{center}
\begin{table}\label{table:summary_con}
\begin{small}

\begin{tabular}{||p{3cm} | p{3.5cm} | p{2.75cm}  | p{3.5cm}||} 
\hline
 Constraint & Price Intervention Form & Guarantees & Possible Issues \\ [0.5ex] 
 \hline\hline
Buyer Parity (Protected Item Group) & 
For each buyer $i \in C$, tax $\bar{p}_i$ on $i$-advantaged items, same size subsidy $\bar{p}_i$ on $i$-disadvantaged items & 
Within buyer-group envy-free, SPL & 
May decrease exposure of protected group relative to original equilibrium \\ 
 \hline
 
Item Parity (Protected Buyer Group) & 
For each item $j \in C$ and buyer $i \in A$, $k \in B$ we have $\bar{p}_{ij} = \bar{t} = -\bar{p}_{kj}$ for some fixed $\bar{t}$ & 
Within buyer-group envy-free, SPL & 
May decrease utility of originally disadvantaged buyers \\
 \hline
 
Aggregate Floor (Can be motivated by protected item \textit{or} protected buyer) & A fixed subsidy $\bar{p}_{ij} = \bar{s}$ for all buyers $i \in C$, items $j \in P$ & Buyer-item Pareto-optimal, within buyer-group envy-free, SPL & Can decrease utility of buyers in $C$ 
\\
 \hline
\end{tabular}
\end{small}
\caption{Summary of properties of constraints studied in Section \ref{sec:constraints}.}\label{table:summary_con}
\end{table}

\end{center}

\section{Experiments}
Our results about the constraints are mostly negative counter-examples. However, it is not clear whether these counter-examples are knife-edge or whether these failures are common. We now evaluate some of our results in randomly generated markets.  

We generate random small markets with $8$ buyers split into two groups and $10$ items split into two groups. The valuations for each buyer for each item are uniform $[0,1].$ We use rejection sampling to make sure that the following conditions are met:

\begin{enumerate}
    \item In PBP experiments we require that initial (i.e. in unconstrained equilibrium) aggregate exposure of item group $A$ to $C$ buyers is at most $.7$ of the exposure of item group $B$ to $C$ buyers
    \item In PIP experiments we require that initial aggregate exposure of $C$ items to $A$ buyers is at most $.7$ of the exposure of $C$ items to $B$ buyers
    \item In AEF experiments we require that less than $15 \%$ of $A$ item supply is allocated to $C$ buyers, our constraint targets an exposure floor of $30 \%$
\end{enumerate}

We first show that OPIC converges quickly in these markets. While our theorem focuses on last iterate convergence with decreasing learning rate, we will look at what is arguably a more natural metric for real world markets. We let $\tilde{x}_T = \frac{1}{T} \sum_{t=0}^{T} x^*_{t}$ be the time average allocation up to time $T$ and consider how well this time averaged allocation satisfies the constraints.

We construct $50$ markets for each constraint and run OPIC for $50$ epochs with a constant learning rate. Figure \ref{fig:exp_results} shows how much the average allocation $\tilde{x}_T$ violates each constraint in absolute value. For each constraint we see that $\tilde{x}_T$ converges quickly to approximately satisfy the constraint even without the learning rate schedule required for last iterate converges per Proposition \ref{prop:opic_works}.

We generate random markets as above and solve for the standard equilibrium $x^*$ as well as the constrained one $x^*_c$ for various constraints. We ask several questions:

We first ask: How much can we Pareto improve the allocation $x^*_c$ for various notions of Pareto improvement? We consider only changing allocation around buyer, so we use $v_i \cdot x^*_c$ as the baseline utility level. Our first notion is the \textbf{buyer Pareto gap} \citep{kroer2019computing} which measures the gain in buyer social welfare we can achieve while ensuring that each buyer attains at least their utility in $x^*_c$. We extend this notion to the \textbf{buyer-item Pareto gap} by requiring that the aggregate exposure of $C$ buyers to disadvantaged items (in the case of PFP, AEF) or the aggregate exposure of disadvantaged buyers to $C$ items (in the case of PIP) also stays at least equal to $x^*_c$. 

We also look at utility outcomes from imposition of constraints. For each buyer $i$ we look at their utility in $x^*$ compared to $x^*_c$. To normalize notation we refer to constrained buyers in $PFP, AEF$ as well as originally disadvantaged buyers in PIP as \textbf{target buyers} and the rest as \textbf{other buyers}.

\begin{figure}[h!]
\includegraphics[scale=.6]{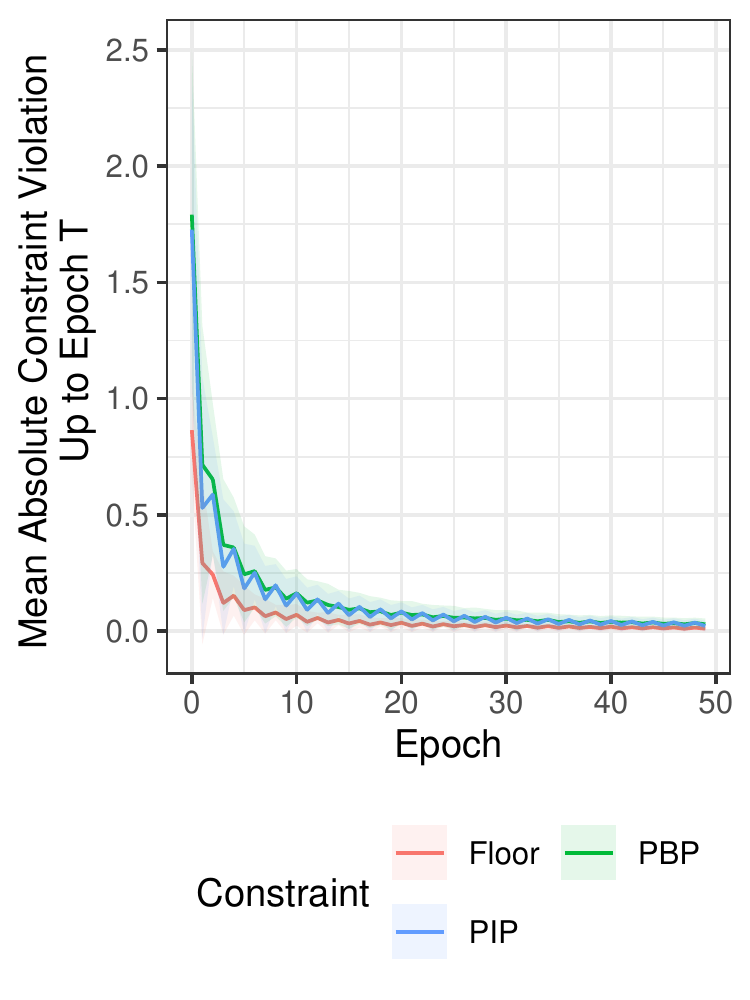} 
\includegraphics[scale=.6]{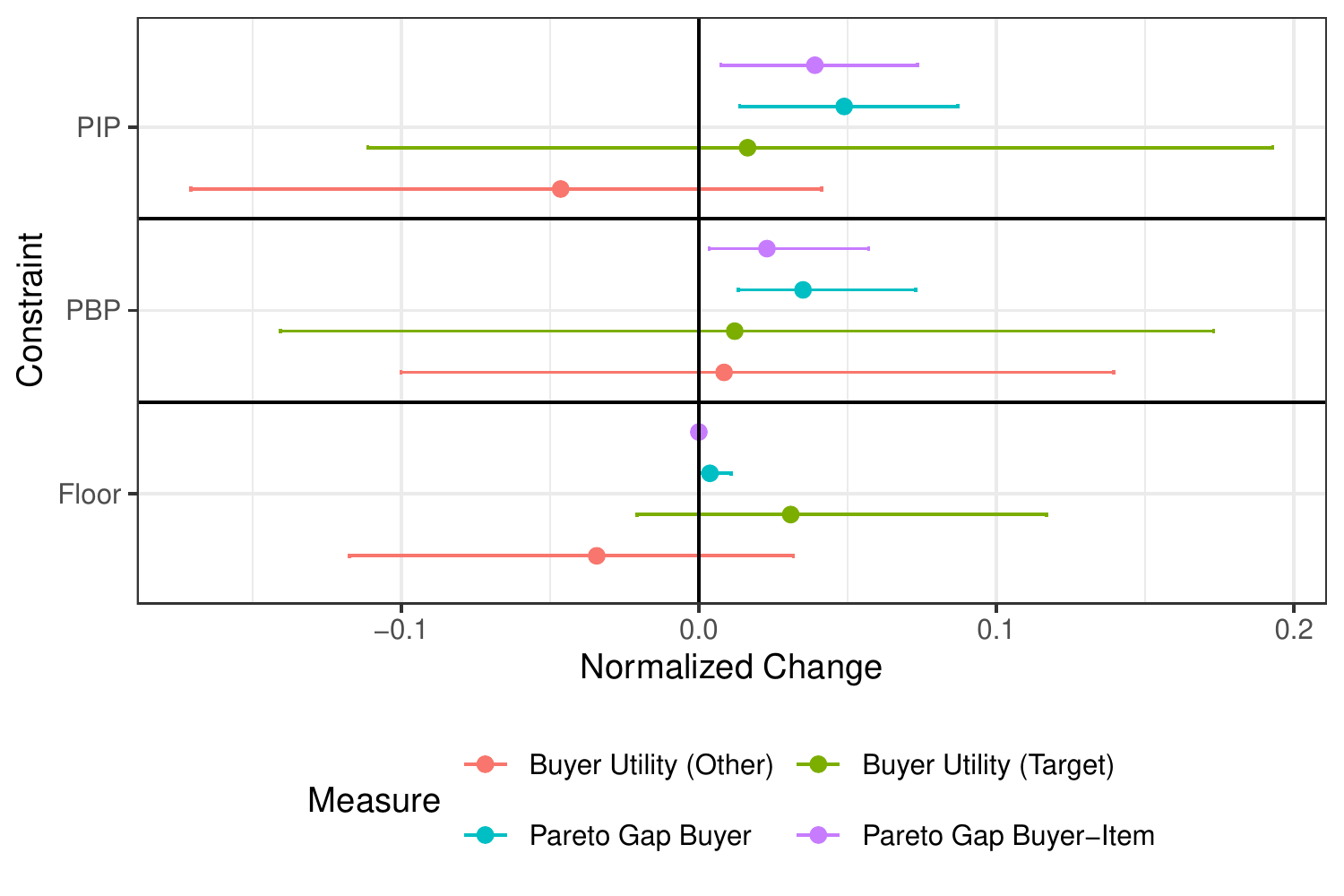}
\caption{Left panel shows that OPIC with a fixed learning rate $(.2)$ converges quickly in our random markets. Right panel shows the effects of adding constraints to random market equilibria. Error bars indicate $.05$ and $.95$ percentiles taken over 100 random markets per constraint.}\label{fig:exp_results}
\end{figure}

We see the $x^*$ to $x^*_c$ comparisons in the right panel of Figure \ref{fig:exp_results}. Dots reflect averages over $100$ markets, error bars reflect $5^{th}$ and $95^{th}$ percentiles. In the case of both Parity constraints we see that both target and non-target buyers can have their utility increased or decreased, whereas in the case of floor constraints a relatively small fraction of Target buyers lose utility and we mostly have a transfer of utility from Other buyers to Target ones. Floor constraints are buyer-item Pareto optimal but in the random markets they are also nearly buyer-only Pareto optimal as well. By contrast both parity constraints leave Pareto improvements on the table in both the buyer only and the buyer-item sense.

We note that in none of our randomly generated markets did we observe the decrease in aggregate exposure from example \ref{exit_market}. This is driven by the fact that in all of our valuations are uniform, identically distributed between groups so there is almost always full budget spend by all buyers. Characterizing the types of value distributions which lead to decreases in exposure is an interesting future problem but beyond the scope of this paper.

\section{Conclusion and Future Directions}
We have studied how a market designer can implement arbitrary constraints on market allocations using price interventions. We have shown how these interventions can be set dynamically when the market designer has limited information about market participants. Finally, we studied what can and cannot be guaranteed when various constraints from the literature are implemented in markets.

We do not make claim that one type of intervention is \textit{always} superior to another. The choice of intervention will, of course, be specific to the market being studied, the goals of the designer, and the sensitivity to various types of trade-offs. Rather, we hope that our work shows that in market settings there can be many second order effects which make interventions work differently from what is intended or expected and gives a framework for trying to make principled trade-off decisions.

\bibliographystyle{ACM-Reference-Format}
\bibliography{template.bbl}   
\clearpage 
\appendix

\section{Appendix}
\subsection{Consequences of fairness interventions}

As discussed in Section \ref{sec:related}, there is a growing body of work analysing how fairness constraints affect groups or welfare beyond a first-order effect of equalizing outcomes. We describe these results in more details here and discuss how our results fit in this larger landscape.

The main line of work has been done in the context of binary classification, and a first type of result is based on highlighting potential misalignment between what is considered ``fair'' and what is considered ``welfare''. Early work discussed the ``cost of fainess'' as reduced predictive performance resulting from enforcing fairness constraints \citep{corbett2017algorithmic,menon2018cost}. The underlying argument is that maximizing accuracy under \emph{any} constraints should decrease accuracy compared to an unconstrained baseline. Other authors discussed the relationship between welfare based on long-term outcomes \citep{kasy2021fairness,liu2018delayed,weber2022enforcing}, and how fairness constraints as proposed in the literature might not align with them. For instance, in lending, standard fairness constraints tend to consider being accepted as a favorable outcome, but in reality it may not be a favorable outcome on the long run if the individual defaults \citep{liu2018delayed}. In such cases, fairness constraints to increase acceptance of one group might come at the cost of increasing the default risk, which would be detrimental for that group in the long run.

The results we obtain are different from this line of work for two reasons. First,  they apply to markets rather than classification tasks: The ``backfire effects'' are second-order effects of budget/supply constraints that do not exist in classification. For instance, in the case of per-item parity constraints, enforcing parity might go against individual utility because we may force buyers to exhaust their budget on an item they like less. 
Second, our results also are of a different type. In our results, ``backfire effects'' do not ncessarily come from miaslignment between the fairness constraints and (long-term) welfare. In the case of per-buyer parity, the possible issue comes from a difference between the local effect of each constraint and the aggregate effects of all constraints. Despite the subsidies for each buyer, a possible second-order effect of parity constraints is to change the equilibirum prices by increasing the cost of purchases of constrained buyers for all item groups.

Additionally, \citet{nilforoshan2022causal} consider fairness constraints defined through long-term outcomes, which they call causal notions of fairness. They consider selection tasks (binary classification with a constraint on acceptance rate) and show that optimal decision policies satisfy fairness constraints only on a set of measure 0 of individual utility functions. 

Finally, another line of work studies the effect of fairness interventions in the presence of strategic agents. The main line of work focuses on stylized job markets, studying long-term effects of affirmative action policies \citep{fang2011theories,mouzannar2019fair,emelianov2022fair} given simple models of recruiters and applicants. While both work study some form of equilibrium, we focus on market equilibria where agents have fixed preferences, rather equilibria induced y strategic behavior in simpler settings.

\section{Proof of Results}
We organize the section by equilibrium property.

\subsection{Tax Subsidy Construction}
\begin{proof}[Proof of Proposition \ref{prop:coneg_works}]
The proof is based on a straightforward generalization of the argument for the Eisenberg-Gale convex program with the additional constraint that $x\in \mathcal D$:
\begin{align}
\begin{array}{rlll}
    \displaystyle\max_{x\geq 0,\delta \geq 0} &  \multicolumn{2}{l}{\displaystyle\sum_{i\in T} B_i \log \big(v_i(x_i) +\delta_i \big) - \delta_i} \\
    \text{s.t.} & \displaystyle \sum_{i} x_{ij} \le 1, & \forall j, \\
    & A_1 x \leq b_1, & \\
    & A_2 x = b_2. &
\end{array}    
\label{eq:mnw with constraints}
\end{align}

Recall that in the standard EG argument, the prices correspond to the Lagrange multipliers $\{p_j\}_{j\in[m]}$ on the supply constraints at the optimum.

Here, we end up with additional Lagrange multipliers $\lambda_1\in \mathbb R^{K_1}_+$ and $\lambda_2 \in \mathbb R^{K_2}$.

To construct a tax-subsidy equilibrium we construct price interventions $\bar{p}_{ij}^* = \sum_{k=1}^{K_1} A_{1kij} \lambda_{1k}^* + \sum_{k=1}^{K_2} A_{2kij} \lambda_{2k}^*$ using the Lagrange multipliers at optimality, set prices equal to the Lagrange multipliers $p_j^*$ on the supply constraints, and take the optimal solution $x^*$ of \cref{eq:mnw with constraints} as the corresponding allocation.

We extend the EG argument for why the corresponding $x^*,p^*$ from \cref{eq:mnw} constitute a market equilibrium, we use KKT conditions as well as a generalization of Euler's identity for homogeneous functions to show that $(x^*, p^*, \bar p^*)$ form a tax-subsidy equilibrium.

  Let $x,\delta$ be an optimal solution to \cref{eq:mnw with constraints}. Such a solution exists by our assumptions.
  We use $\nabla_{j} u_i(x_i,\delta_i)$ to denote the $j$'th component of an arbitrarily-selected subgradient of $v_i(x_i)$.
  Let $\nu_i$ and $\mu_{ij}$ be the Lagrange multipliers for $\delta_i \geq 0$ and $x_{ij}\geq 0$ respectively.

  We start by showing that each buyer $i$ spends their budget exactly. We let $u_i = u_i(x_i,\delta_i)$.
  Now consider the KKT conditions for \cref{eq:mnw with constraints} (we leave out primal and dual feasibility conditions here since they are straightforward):
  \begin{enumerate}
    \item (Stationarity) For each $x_{ij}$ we have
    \begin{align*}
    (\nabla_{j} u_i(x_i,\delta_i)) \frac{B_i}{u_i} - p_j - \sum_{k=1}^{K_1} A_{1kij} \lambda_{1k}& \\ - \sum_{k=1}^{K_2} A_{2kij} \lambda_{2k} + \mu_{ij} &= 0,
    \end{align*}
    and for $\delta_i$ we have $\frac{B_i}{u_i} + \nu_i = 1$.
    \item (Complementary slackness) 
    \begin{align}
      x_{ij}\mu_{ij}=0,\ & \forall i,j,\\
      \delta_i \nu_i = 0, \forall i,\\
      p_j(1 - \sum_i x_{ij}) = 0,\ & \forall j,\\
      \lambda_1^\top (b_1 - A_1x)=0,\\
      \lambda_2^\top (b_2 - A_2x)=0.
    \end{align}
  \end{enumerate}

  Rewriting stationarity, using the fact that $x_{ij}\mu_{ij}=0$, and using our definition of $\bar p_{ij}$, we get
  \[
    (\nabla_{j} u_i(x_i, \delta_i)) \frac{B_i}{u_i} \leq  p_j+\bar p_{ij},
  \]
  where equality holds if $x_{ij} > 0$. Let $\tilde p_{ij} = p_j+\bar p_{ij}$.
  Now we multiply each side of the rewritten stationarity condition by $x_{ij}$, sum over $j$, and add $\delta_i \frac{B_i}{u_i} = \delta_i$ to get
  \[
    \sum_j x_{ij} (\nabla_{j} u_i(x_i,\delta_i)) \frac{B_i}{u_i} + \delta_i \frac{B_i}{u_i}
    = \sum_j x_{ij}\tilde p_{ij} + \delta_i.
  \]
  Now we can apply the \emph{generalized Euler identity for subdifferentials}~\citep{yang2008generalized}:
  let $g \in \partial v_i(x_i)$ belong to the subdifferential of $v_i$ at $x_i$. Then the generalized Euler identity states that $\sum_{j} g_j x_{ij} = v_i(x_i)$.

  Applying this identity we get
  \[
     \sum_j x_{ij}\tilde p_{ij} + \delta_i =  u_i \frac{B_i}{u_i} = B_i.
  \]
  Since the left-hand side is expenditure, we get that buyer $i$ spends their budget exactly.

  Next, we need to show that for every alternative bundle $x_i',\delta_i'$ such that $\sum_j x_i'\tilde p_{ij} + \delta_i' \leq B_i$, we have $u_i \geq u_i(x_i', \delta_i')$.
  By concavity of $u_i$ we have 
  \begin{align*}
    u_i(x_i',\delta_i')& - u_i \leq \sum_j \nabla_j u_i(x_i,\delta_i)(x_{ij}' - x_{ij}) + \delta_i' - \delta_i \\
    &=\frac{u_i}{B_i} {\sum}_j  (\tilde p_{ij} - \mu_{ij}) (x_{ij}' - x_{ij}) + \frac{u_i}{B_i} (1-\nu_i)(\delta_i' - \delta_i) \\
    &= \frac{u_i}{B_i} \big( {\sum}_j  (\tilde p_{ij} - \mu_{ij}) x_{ij}' +  (1-\nu_i)\delta_i'  - B_i\big) \\
    &\leq \frac{u_i}{B_i} \big( {\sum}_j   \tilde p_{ij}  x_{ij}' +  \delta_i'  - B_i\big) \\
    &\leq 0
  \end{align*}
where the first equality follows by stationarity, and the second by complementary slackness and the fact that $(x_i, \delta_i)$ spends the budget exactly.
The second-to-last inequality follows because all variables are positive, and the last inequality follows by budget feasibility.
Thus we have shown that each buyer $i$ receives a bundle $x_i$ belonging to their demand set $D_i(p+\bar p_i)$.

Finally we need to check the market clearing condition: this follows immediately from complementary slackness on $p_j$ and primal feasibility.
\end{proof}

\subsection{OPIC}
\begin{proof}[Proof of Lemma \ref{lem:tax_exists}]
Consider the modified EG program 

\begin{align}
\begin{array}{rlll}
    \displaystyle\max_{x\geq 0,\delta \geq 0} &  \multicolumn{2}{l}{\displaystyle\sum_{i\in T} B_i \log \big(v_i(x_i) +\delta_i \big) - \delta_i - \sum_{ij} \bar{p}_{ij} x_{ij}} \\
    \text{s.t.} & \displaystyle \sum_{i} x_{ij} \le 1, & \forall j,
\end{array}    
\end{align}

We can apply the same argument as in the proof of Proposition \ref{prop:coneg_works} to see that this EG program produces an allocation where the supporting prices faced by each buyer $i$ are $p + \bar{p}_i$. 
\end{proof}

\begin{proof}[Proof of Proposition \ref{prop:opic_works}]
Recall that we assume that there exists a supply-feasible allocation $x$ such that $A_1 x < b_1$, $A_2x=b_2$, and $u_i(x_i) > 0$ for all buyers $i$.

With this assumption, we have that Slater's condition holds, and in turn this means that strong duality holds. Thus, we can equivalently solve \cref{eq:mnw with constraints} by solving the following saddle-point problem:
\begin{equation}
\begin{aligned}
    \min_{\lambda_1\geq 0,\lambda_2} \bigg\{\max_{x\geq 0,\delta \geq 0} &\sum_{i\in T} \left(B_i \log \big(v_i(x_i) +\delta_i \big) - \delta_i\right) 
    & - \lambda_1^\top (A_1x-b_1) - \lambda_2^\top (A_2x - b_2) \\
    \text{s.t.} &  \sum_{i} x_{ij} \le 1,  \forall j \bigg\}
\end{aligned}
\label{eq:Lagrangian dual mnw}
\end{equation}

Let $g(\lambda_1,\lambda_2)$ equal the value of the inner maximization problem for a fixed $\lambda_1,\lambda_2$. This is a convex minimization problem in $\lambda_1,\lambda_2$.

For any fixed $\lambda_1,\lambda_2$, we have from Danskin's theorem that the subgradients of $g$ are given by $(A_1x(\lambda_1,\lambda_2) - b_1, A_2x(\lambda_1,\lambda_2) - b_2)$, where $x(\lambda_1,\lambda_2)$ is any optimal solution to the inner maximization problem given $\lambda_1,\lambda_2$.

Note that our subgradients are obviously bounded, since the total violation possible is bounded. We can apply standard results from convex optimization show that if $$\lim_{t\rightarrow \infty} \gamma^t = 0$$ and $$\sum_{t=0}^\infty \gamma^t = \infty$$ then the iterate sequence $\{\lambda_1^t,\lambda_2^t\}$ converges to some optimal Lagrange multipliers $\lambda_1^*,\lambda_2^*$ in the sense that $$\lim_{t\rightarrow \infty} g(\lambda_1^t,\lambda_2^t) = g(\lambda_1^*,\lambda_2^*).$$ Moreover, if $\sum_{t=0}^\infty \gamma_t^2 < \infty$, then $\lim_{t\rightarrow \infty} (\lambda_1^t,\lambda_2^t) \rightarrow (\lambda_1^*,\lambda_2^*)$ (see e.g. \citet{bertsekas2015convex} Proposition 3.2.6).
\end{proof}

\subsection{Pareto Optimality}

\begin{proof}[Counter Examples for PBFP Pareto Optimality]
See Example \ref{example_pareto_pbfp} in the Extra Examples section.
\end{proof}

\begin{proof}[Counter Example to AEF Buyer-Only Pareto Optimality]
See Example \ref{example_pareto_aef} in the Extra Examples section.
\end{proof}

\begin{proof}[Proof that AEF is Buyer-Protected-Item Pareto Optimal]
  Let $x^*$ be the optimal solution to the standard EG program, and let $x^f$ be the optimal solution to EG with the floor constraint added for some set of buyers $C$ and set of protected items $A$.
  We break the proof into two exhaustive cases.

  Case 1: suppose that the EG objective is the same under $x^*$ and $x^f$. In that case, it must be that $u_i(x_i^*, \delta^*_i) = u_i(x_i^f, \delta^f_i)$ for all buyers $i$, since there is a unique set of equilibrium utilities for EG (this follow from strict convexity of the log function). It then follows that $x^f$ must be Pareto optimal, since $x^*$ is Pareto optimal.

  Case 2: suppose that the EG objective strictly decreases. In that case, let $\lambda$ be the Lagrange multiplier on the floor constraint. Since the floor constraint leads to a strict decrease in objective, we must have that $\lambda > 0$. Now, by optimality and Lagrangian duality, we know that $x^f$ must maximize the objective $\sum_i B_i \log u_i(x_i, \delta_i) + \lambda \sum_{i\in \mathcal C} \sum_{j\in A} x_{ij}$ over the set of supply-feasible allocations $x$. But then it follows immediately that $x^f$ must be buyer-protected-item Pareto optimal, since if there exists an alternative allocation $x$ such that $u_i(x_i, \delta_i) \geq u_i(x^f_i, \delta^f_i)$ and $\sum_{i\in \mathcal C} \sum_{j\in A} x_{ij} \geq \sum_{i\in \mathcal C} \sum_{j\in A} x_{ij}^f$, with strict inequality in at least one of these inequalities, then we could strictly improve the Lagrangian objective, which is a contradiction.

  Since the EG objective must either stay the same or decrease under $x^f$, this shows that buyer-protected-item Pareto optimality holds.
\end{proof}

\subsection{Envy Freeness}
In this section we show that allocation according to EG with PBFP, PIFP, or AEF constraints leads to envy-freeness within each buyer group.
The proofs will all use the following observation:
\begin{fact}
  \label{fact:envy}
  For any pair of buyers $i,i'$ in a market equilibrium (possibly with constraints) we have that if buyer $i$ can afford $(B_i/B_{i'})(x_{i'}+\delta_{i'})$ under their personalized prices $p+\bar p_i$, then $u_i(x_i, \delta_i) \geq (B_i/B_{i'})u_i(x_{i'}, \delta_{i'})$.
\end{fact}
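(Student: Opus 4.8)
The plan is to reduce the statement to the single defining property of a tax-subsidy equilibrium, namely that each buyer's bundle solves their own budget-constrained utility maximization under personalized prices, and then to exploit the degree-one homogeneity of $v_i$ together with the quasilinear structure. Write $\gamma = B_i / B_{i'}$. The first step is the homogeneity identity: since $v_i$ is homogeneous of degree one we have $v_i(\gamma x_{i'}) = \gamma v_i(x_{i'})$, so that
\[
  u_i(\gamma x_{i'}, \gamma \delta_{i'}) = v_i(\gamma x_{i'}) + \gamma \delta_{i'} = \gamma\big(v_i(x_{i'}) + \delta_{i'}\big) = \gamma\, u_i(x_{i'}, \delta_{i'}).
\]
This identity is the crux: scaling buyer $i'$'s entire bundle (goods and leftover money) by $\gamma$ scales the utility \emph{as evaluated by buyer $i$} by exactly $\gamma$.

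Second, I would invoke the equilibrium condition. In any tax-subsidy equilibrium, possibly with constraints, Proposition~\ref{prop:coneg_works} guarantees that buyer $i$'s allocation $(x_i, \delta_i)$ lies in the demand set $D_i(p + \bar p_i)$; equivalently, $(x_i, \delta_i)$ maximizes $u_i$ over all bundles affordable under the personalized prices $p + \bar p_i$. The hypothesis of the Fact is precisely that the scaled bundle $(\gamma x_{i'}, \gamma \delta_{i'})$ is affordable to buyer $i$ under these prices. Demand optimality therefore yields $u_i(x_i, \delta_i) \geq u_i(\gamma x_{i'}, \gamma \delta_{i'})$, and combining with the homogeneity identity gives $u_i(x_i, \delta_i) \geq \gamma\, u_i(x_{i'}, \delta_{i'})$, which is the claim.

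The only point requiring care---and the place I would flag as the main obstacle---is the bookkeeping of affordability in the quasilinear, constrained setting. I must confirm that the fairness constraints do not enter buyer $i$'s feasible set directly: they are fully absorbed into the intervention $\bar p_i$, so buyer $i$ genuinely faces only the single budget constraint $(p + \bar p_i)^\top x_i + \delta_i \le B_i$ at personalized prices, and the demand-set membership from Proposition~\ref{prop:coneg_works} is valid even in the constrained equilibrium. Once that is established, the leftover money $\gamma \delta_{i'}$ is correctly treated as part of buyer $i$'s expenditure, the affordability hypothesis is exactly the feasibility of the scaled bundle, and the remainder is immediate. No further calculation is needed beyond the one display above.
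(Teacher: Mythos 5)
Your proposal is correct and follows essentially the same route as the paper's own (very terse) proof: degree-one homogeneity of $v_i$ gives $u_i(\gamma x_{i'},\gamma\delta_{i'})=\gamma\,u_i(x_{i'},\delta_{i'})$, and demand-set membership of $(x_i,\delta_i)$ under the personalized prices $p+\bar p_i$ then yields the inequality, since the scaled bundle is affordable by hypothesis. Your added remark that the constraints are absorbed entirely into $\bar p_i$ (so buyer $i$ faces only a single budget constraint) is exactly the content of the paper's tax-subsidy equilibrium definition and Proposition~\ref{prop:coneg_works}, so it is a correct and welcome elaboration rather than a deviation.
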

This follows from the fact that buyer $i$ receives something in their demand set along with the homogeneity of $v_i$ (and thereby of $u_i$). If they preferred the other bundle (or a scaled version thereof), then they would want to buy that instead.

We can now easily prove that each allocation approach yields (budget-adjusted) envy freeness within each group. For unconstrained buyers the proof is the same for all the mechanisms: since unconstrained buyers all see the same price vector $p$, they already satisfy the affordability condition in Fact \ref{fact:envy}. We thus restrict each proof to handling the case of buyers that face constraints.
\begin{proposition}
  EG with PBFP constraints yields per-group envy freeness.
\end{proposition}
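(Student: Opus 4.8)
The plan is to verify the affordability hypothesis of Fact~\ref{fact:envy} for any pair of constrained buyers $i, i' \in C$, after which the conclusion $u_i(x_i,\delta_i) \geq (B_i/B_{i'})\, u_i(x_{i'}, \delta_{i'})$ — which is exactly budget-adjusted envy-freeness once one rewrites $v_i(\gamma x_{i'}) + \gamma\delta_{i'} = \gamma u_i(x_{i'},\delta_{i'})$ via homogeneity — follows immediately. As already noted, the unconstrained-buyer case is settled because all unconstrained buyers face the common price vector $p$, so I restrict attention to $i, i' \in C$.

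First I would recall the exact shape of the interventions supplied by Proposition~\ref{prop:coneg_works}. Writing $\lambda_i$ for the Lagrange multiplier on buyer $i$'s parity constraint $\sum_{j\in A} x_{ij} = \alpha \sum_{j\in B} x_{ij}$, the intervention is $\bar p_{ij} = \lambda_i$ for $j \in A$ and $\bar p_{ij} = -\alpha\lambda_i$ for $j \in B$. The essential point — and the one that makes within-group envy-freeness go through despite each constrained buyer facing a \emph{different} tax/subsidy schedule — is that this schedule integrates to zero against any parity-feasible bundle. Concretely, because $x_{i'}$ is itself constrained, $\sum_{j\in A} x_{i'j} = \alpha\sum_{j\in B} x_{i'j}$, and hence
\[
  \sum_j \bar p_{ij}\, x_{i'j} = \lambda_i\Big(\sum_{j\in A} x_{i'j} - \alpha\sum_{j\in B} x_{i'j}\Big) = 0 .
\]

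The rest is bookkeeping on the budget. The display above shows that the cost of $x_{i'}$ evaluated at buyer $i$'s personalized prices $p + \bar p_i$ equals its cost at the base prices, $\sum_j (p_j + \bar p_{ij}) x_{i'j} = \sum_j p_j x_{i'j}$; the same cancellation applied to $i'$'s own schedule gives $\sum_j(p_j+\bar p_{i'j})x_{i'j} = \sum_j p_j x_{i'j}$. Combining this with the exact-budget-spending identity $\sum_j (p_j+\bar p_{i'j}) x_{i'j} + \delta_{i'} = B_{i'}$ established in the proof of Proposition~\ref{prop:coneg_works} gives $\sum_j p_j x_{i'j} + \delta_{i'} = B_{i'}$. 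Scaling the bundle and the slack by $\gamma = B_i/B_{i'}$ and again using the parity cancellation, the cost of $(\gamma x_{i'}, \gamma\delta_{i'})$ under buyer $i$'s personalized prices is exactly $\gamma\big(\sum_j p_j x_{i'j} + \delta_{i'}\big) = \gamma B_{i'} = B_i$, so buyer $i$ can afford it. Fact~\ref{fact:envy} then yields the desired inequality.

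I expect the only conceptual obstacle to be the cancellation observation: a reader might worry that the differing multipliers $\lambda_i \neq \lambda_{i'}$ break the usual single-price envy-freeness argument. The resolution is precisely that the PBP intervention is a balanced tax/subsidy — equal and opposite across $A$ and $B$ up to the $\alpha$ weighting — so it contributes nothing to the price of any bundle that already meets parity, and every constrained buyer's equilibrium bundle does. Note also that Fact~\ref{fact:envy} only requires affordability under the personalized prices (not that the scaled comparison bundle itself satisfy the parity constraint), since $x_i$ lies in the \emph{unconstrained} demand set $D_i(p+\bar p_i)$; everything else is routine.
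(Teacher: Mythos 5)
Your proof is correct and takes essentially the same route as the paper's: both verify the affordability hypothesis of Fact~\ref{fact:envy} via the key cancellation that the balanced PBP schedule $\bar p_i$ vanishes against any parity-feasible bundle (so $\bar p_i^\top x_{i'} = 0$), and then conclude by budget bookkeeping that buyer $i$ can afford $(B_i/B_{i'})x_{i'}$ with leftover budget $(B_i/B_{i'})\delta_{i'}$. The only difference is that you carry the general weight $\alpha$ through the argument whereas the paper specializes to $\alpha = 1$; this changes nothing substantive.
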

\begin{proof}
  Consider a pair of buyers $i,i'$ that are constrained to satisfy PBFP. We know that their allocations $x_i,x_{i'}$ are such that $\sum_{j\in A} x_{ij} =  \sum_{j\in B} x_{ij}$ and $\sum_{j\in A} x_{i'j} =  \sum_{j\in B} x_{i'j}$.
  It follows that $\bar p_{i}^\top x_i = \bar p_{i'}^\top x_{i'} = 0$, and moreover $\bar p_{i}^\top x_{i'} = \sum_{j\in A} \lambda x_{i'j} - \sum_{j\in B} \lambda x_{i'j} = 0$, where $\lambda$ is the Lagrange multiplier on the PBFP constraint of buyer $i$, which means that $\bar p_{ij} = \lambda$ for $j\in A$ and $\bar p_{ij} = -\lambda$ for $j\in B$.
  Since the only personalized price that $i$ faces is the Lagrange multiplier on their PBFP constraint, it follows immediately that $i$ can afford $(B_i/B_{i'})x_{i'}$ with leftover budget $B_i/B_{i'}\delta_{i'}$, and thus they have no budget-adjusted envy against $i'$.
\end{proof}

\begin{proposition}
  EG with PIFP constraints yields per-group envy freeness.
\end{proposition}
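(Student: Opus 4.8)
The plan is to reuse the template from the PBFP envy-freeness proof but to exploit a structural simplification special to PIP: here the price interventions depend on a buyer only through its group label, not on its identity. As the excerpt already notes, within-group envy-freeness for \emph{unconstrained} buyers is immediate from Fact \ref{fact:envy} because they all see the common price vector $p$; I would show that the identical reasoning applies to each PIP group, since every buyer in a fixed group faces the same personalized price vector.

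First I would pin down the intervention form via Proposition \ref{prop:coneg_works}. Each constrained item $j \in C$ contributes exactly one equality constraint $\sum_{i\in A} x_{ij} - \alpha \sum_{i\in B} x_{ij} = 0$ with a single multiplier $\lambda_j$; the coefficient of $x_{ij}$ is $+1$ when $i\in A$, $-\alpha$ when $i\in B$, and $0$ in every other item's constraint. Hence $\bar p_{ij} = \lambda_j$ for $i\in A$ and $\bar p_{ij} = -\alpha\lambda_j$ for $i\in B$ on items $j\in C$, while $\bar p_{ij}=0$ for $j\notin C$. The essential observation is that $\bar p_i$ depends on $i$ only through its group, so any two buyers $i,i'$ in the same group satisfy $\bar p_i = \bar p_{i'}$ and face the identical personalized prices $p+\bar p_i = p+\bar p_{i'}$.

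The key step is the affordability computation feeding Fact \ref{fact:envy}. By the exact-budget-spending conclusion inside the proof of Proposition \ref{prop:coneg_works}, buyer $i'$ satisfies $\sum_j (p_j + \bar p_{i'j}) x_{i'j} + \delta_{i'} = B_{i'}$. Because $\bar p_i = \bar p_{i'}$, the cost under $i$'s prices of the scaled bundle $(B_i/B_{i'})x_{i'}$ equals $(B_i/B_{i'})\sum_j (p_j+\bar p_{i'j})x_{i'j} = (B_i/B_{i'})(B_{i'}-\delta_{i'})$; adding the scaled leftover $(B_i/B_{i'})\delta_{i'}$ gives exactly $B_i$. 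Thus $i$ can afford $(B_i/B_{i'})(x_{i'},\delta_{i'})$, and Fact \ref{fact:envy} immediately yields $u_i(x_i,\delta_i) \geq (B_i/B_{i'})u_i(x_{i'},\delta_{i'})$, which is precisely budget-adjusted envy-freeness within the group. The case of two group-$B$ buyers is identical (the common factor $-\alpha\lambda_j$ cancels in the same way).

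I do not expect a genuine obstacle here. Unlike PBFP, where the interventions are buyer-specific and one must separately verify that a rival's bundle carries zero net intervention cost, the PIP interventions are group-uniform, which collapses the argument to the unconstrained case. The only points needing care are reading off the correct sign and scale ($\lambda_j$ versus $-\alpha\lambda_j$) from the constraint matrix, and confirming that each buyer--item pair is touched by at most one parity constraint so that $\bar p_{ij}$ is unambiguous.
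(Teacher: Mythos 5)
Your proposal is correct and follows essentially the same route as the paper's proof: both observe that PIFP interventions are group-uniform, so any two buyers in the same group face identical personalized prices, from which affordability of the budget-scaled rival bundle and hence envy-freeness via Fact \ref{fact:envy} follow immediately. Your added detail (reading off $\bar p_{ij} = \lambda_j$ versus $-\alpha\lambda_j$ from the constraint matrix, and verifying affordability through exact budget spending) merely makes explicit what the paper states tersely.
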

\begin{proof}
  For any pair of buyers $i,i'\in A$, we have that they face the same price $p_j + \lambda_j$ for each item $j$, where $\lambda_j$ is the Lagrange multiplier of the PIFP constraint. It immediately follows that $i$ can afford $(B_i/B_{i'})x_{i'}$ with leftover budget $B_i/B_{i'}\delta_{i'}$. The same argument holds for pairs of buyers in group $B$.
\end{proof}

\begin{proposition}
  EG with AEF constraints yields per-group envy freeness.
\end{proposition}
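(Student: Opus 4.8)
The plan is to reduce this to Fact~\ref{fact:envy} exactly as in the PIFP case, the only real work being to verify that every constrained buyer faces the same personalized price vector. First I would observe that the AEF requirement $\sum_{i\in\mathcal C}\sum_{j\in P} x_{ij}\ge L$ is a \emph{single} linear constraint, so by Proposition~\ref{prop:coneg_works} it contributes exactly one Lagrange multiplier $\lambda\ge 0$, and the induced intervention reads off the constraint matrix as $\bar p_{ij}=-\lambda$ for every $i\in\mathcal C$ and $j\in P$, and $\bar p_{ij}=0$ otherwise. The crucial point is that this vector does not depend on the identity of the buyer within $\mathcal C$: each constrained buyer sees the identical personalized price $p+\bar p_i$, namely a uniform subsidy $\lambda$ on the protected items $P$ and the base price elsewhere.

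Second, I would fix two constrained buyers $i,i'\in\mathcal C$. Since they face the same personalized prices and each spends their budget exactly (the budget-exhaustion argument in the proof of Proposition~\ref{prop:coneg_works} applies here verbatim to the AEF-augmented program), the scaled bundle $(B_i/B_{i'})(x_{i'},\delta_{i'})$ costs exactly $B_i$ at $i$'s personalized prices and is therefore affordable to $i$. Fact~\ref{fact:envy} then yields $u_i(x_i,\delta_i)\ge (B_i/B_{i'})\,u_i(x_{i'},\delta_{i'})$, i.e., $i$ has no budget-adjusted envy toward $i'$. The unconstrained buyers are dispatched by the generic argument already recorded before Fact~\ref{fact:envy}, since they all face the common base price vector $p$.

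There is essentially no hard step here: the entire content is the observation that a single aggregate constraint produces a buyer-independent subsidy, after which the argument is identical in form to the PIFP and PBFP cases. If anything, the only point requiring a line of care is to confirm that the complementary-slackness and budget-exhaustion reasoning of Proposition~\ref{prop:coneg_works} transfers unchanged to the AEF-augmented program, so that affordability of the competitor's scaled bundle translates cleanly into the envy-free inequality.
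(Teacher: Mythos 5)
Your proposal is correct and follows essentially the same route as the paper: both observe that the single AEF constraint induces a uniform, buyer-independent personalized price (a subsidy $\lambda$ on items in $P$) for all buyers in $\mathcal{C}$, and then invoke Fact~\ref{fact:envy} via affordability of the budget-scaled bundle, with unconstrained buyers handled by the generic common-price argument. Your extra care about budget exhaustion is fine but not strictly needed, since affordability already follows from budget feasibility of $i'$'s bundle under the shared prices.
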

\begin{proof}
  For any pair of buyers $i,i'\in \mathcal C$, we have that they face the same price $p_j + \lambda_j$ for each item $j$, where $\lambda_j$ is the Lagrange multiplier of the floor constraint. It immediately follows that $i$ can afford $(B_i/B_{i'})x_{i'}$ with leftover budget $B_i/B_{i'}\delta_{i'}$.
\end{proof}

\section{Strategyproofness in the Large}
\label{app:spl}

First we survey the result of \citet{azevedo2018strategy}. 
This is a result for a general class of mechanisms that specify how agents can report types and how that leads to outcomes (which may include allocations as well as payments).
We will consider a variant of their result covered in the appendix of their paper: we will need a more general outcome space, and semi-anonymous mechanisms. We use $\Delta T$ to denote the set of probability distributions over a given set $T$, and $\bar\Delta T$ to denote the set of distributions with full support. 
We note that this section overloads some notation that we used for Fisher markets; this is necessary in order to stay consistent with the notation of \citet{azevedo2018strategy}, and to aid readability in terms of sticking to conventional notation.

Next we specify the assumption made on the mechanism setting for the setting of \citet{azevedo2018strategy}. Let us call this the \emph{Azevedo-Budish setting}.
In the Azevedo-Budish setting, there is a finite set of types $T$ and a measurable set of outcomes $X_0$ (see the appendix of \citet{azevedo2018strategy} for the measurable outcomes case) that an individual agent may receive.
For each type $t_i\in T$ there's a utility function $u_{t_i}: X\rightarrow [0,1]$, where $X=\Delta X_0$ is the set of probability distributions over outcomes that an agent may receive. 
Note that in the Azevedo-Budish setup, any possible payments are included in $X_0$ and $u_{t_i}$, so we do not add an explicit extra variable corresponding to leftover budget.

The sets $T$ and $X_0$ are held fixed for all market sizes. For each market size $n$, where $n$ is the number of agents, there is a set $Y_n \subset (X_0)^n$ of feasible allocations. In our Fisher market setting $Y_n$ is simply the set of supply-feasible allocations given the supply of items for a market of size $n$.

For a sequence of feasibility constraints $\{Y_n\}_n$ we say that a \emph{direct mechanism} is a sequence of allocation functions $\left(\Phi^n : T^n \rightarrow \Delta((X_0)^n) \right)_{n\in N}$ such that for all market sizes $n$ and vectors of type reports $t\in T^n$, $\Phi^n(t)$ is contained in $Y_n$.\footnote{\citet{azevedo2018strategy} also handle indirect mechanisms, but we do not need to consider that case here}

We assume that the mechanism is \emph{semi-anonymous}: the types space is partitioned into groups $T = \cup_{g\in G} T_g$. In our Fisher market setting, the groups will be the constrained and unconstrained buyers.
Each agent of type $t\in T_g$ is restricted to reporting a type in $T_g$. For our constrained market setting, this corresponds to the fact that buyers cannot lie about whether they are constrained, but they may misreport their valuation function.

For each $n$ we assume that we are given some type distribution $\mu \in \Delta T$ resulting e.g. from the underlying distribution over types composed with some, possibly randomized, map from types to report types.
We then need the function $\phi^n: T \times \Delta T \rightarrow X$ which specifies the expected allocation function for each buyer given a reported type and a type distribution. This is defined to be 
\[
  \phi^n(t_i,\mu) = \sum_{t_{-i} \in T^{n-1}} \Phi^n_i(t_i, t_{-i}) \cdot \textrm{Pr}(t_{-i}| \mu).
\]

\begin{definition}

  A direct and semi-anonymous mechanism $\{\Phi^n\}_n$ is strategy-proof in the large (SPL) if, for any semi-anonymous type distribution with full support $\mu \in \bar\Delta T$ and $\epsilon > 0$, there exists $n_0$ such that for all $n \geq n_0$, all $g\in G$, and all $t_i,t_i' \in T_g$,
  \[
    u_{t_i}(\phi^n(t_i,\mu)) \geq   u_{t_i}(\phi^n(t_i',\mu)) - \epsilon,
  \]
  i.e. there exists an $n_0$ such that for every type, the incentive to misreport is at most $\epsilon$.
\end{definition}

\citet{azevedo2018strategy} show the following result for semi-anonymous direct mechanisms:
\begin{theorem}
  Assume that we are in the Azevedo-Budish setting.
  If a mechanism is envy free within each group, then it is SPL. Given a type distribution with full support $\mu \in \bar \Delta T$ and $\epsilon > 0$, there exists $C> 0$ such that for all $g\in G, t_i,t_i' \in T_g$ and $n$, the gain from deviating is bounded above by $C\cdot n^{-1/2 + \epsilon}$.
  \label{thm:azevedo budish}
\end{theorem}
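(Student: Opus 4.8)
The plan is to follow the argument of \citet{azevedo2018strategy}, specialized to the semi-anonymous, measurable-outcome setting described above, and to reduce strategy-proofness in the large to the within-group envy-freeness hypothesis. The conceptual point is that in a large semi-anonymous market the report of any single agent has a vanishing effect on the \emph{environment} faced by the other agents, where the environment is summarized by the empirical distribution of reported types. Consequently, an agent of true type $t_i$ who deviates to some $t_i'\in T_g$ receives, up to a vanishing error, the same expected bundle that a truthful agent reporting $t_i'$ would receive in the original population; but within-group envy-freeness says precisely that type $t_i$ weakly prefers the bundle assigned to its own report over the bundle assigned to any other report in $T_g$. The two facts combine to bound the gain from deviating.

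Concretely, I would proceed as follows. First, invoke semi-anonymity to write the allocation $\Phi^n_i(t_i,t_{-i})$ as a function of $t_i$ together with the empirical distribution of the remaining reports $t_{-i}$; this is the object to which continuity and concentration arguments apply. Second, fix a realized profile and apply the envy-freeness hypothesis: in that profile, a true-$t_i$ agent does not envy any same-group agent, in particular one whose realized report is $t_i'$. Since each $u_{t_i}$ is an expected-utility functional and hence affine on the lottery space $X=\Delta X_0$, this realization-wise comparison integrates over the draw of $t_{-i}$ into a comparison of the \emph{expected} allocations $\phi^n(t_i,\mu)$ and $\phi^n(t_i',\mu)$. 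Third, relate the deviation to a perturbation of the population distribution: replacing agent $i$'s report $t_i$ by $t_i'$ changes the empirical distribution of the $n$ reports by $O(1/n)$, while the fluctuation of the empirical distribution of the other $n-1$ i.i.d.\ draws around $\mu$ is $O(n^{-1/2})$ by a central-limit/Hoeffding estimate. Fourth, show that the expected allocation map $\phi^n(\cdot,\cdot)$ is continuous in the population distribution uniformly in $n$, so that these $O(1/n)$ and $O(n^{-1/2})$ distributional perturbations translate into $O(n^{-1/2})$ utility perturbations; full support of $\mu\in\bar\Delta T$ is what guarantees the limiting allocation is well-defined and that rarely-reported types do not destroy this continuity.

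Assembling these pieces yields the displayed inequality $u_{t_i}(\phi^n(t_i,\mu)) \geq u_{t_i}(\phi^n(t_i',\mu)) - \epsilon$. The explicit polynomial rate $C\cdot n^{-1/2+\epsilon}$ comes from splitting the expectation over $t_{-i}$ into the high-probability event on which the empirical distribution is within $O(n^{-1/2+\epsilon})$ of $\mu$, where the envy-freeness comparison applies with only $O(n^{-1/2})$ slack, and the complementary low-probability tail, whose contribution is negligible because utilities are bounded in $[0,1]$ and the tail probability decays faster than any polynomial.

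I expect the main obstacle to be the fourth step: establishing the \emph{uniform} (in $n$) continuity of the expected allocation map with respect to the type distribution and extracting the explicit $n^{-1/2+\epsilon}$ modulus, rather than mere asymptotic convergence. In the finite-outcome case this is handled by the limit-mechanism machinery of \citet{azevedo2018strategy}; the present setting additionally requires their measurable-outcomes extension, since Fisher-market allocations take values in a continuum $X_0$, so the empirical-distribution and concentration arguments must be carried out over distributions on a measurable outcome space rather than a finite one. The remaining steps—semi-anonymity, the affine-utility aggregation of the envy-freeness comparison, and the CLT estimate—are routine once this continuity is in hand.
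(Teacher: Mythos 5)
First, a framing point: the paper does not prove this theorem at all. It is quoted verbatim from \citet{azevedo2018strategy} and invoked as a black box (the paper's own work is in verifying that its EG-AEF/PIFP/PBFP mechanisms satisfy the hypotheses). So the comparison here is between your reconstruction and the original Azevedo--Budish argument. Your steps 1--3 (semi-anonymity, realization-wise envy-freeness integrated over opponent draws, the $O(1/n)$ versus $O(n^{-1/2})$ decomposition) track that argument. But your step 4 contains a genuine gap, and you have correctly identified it as the load-bearing step: you propose to establish that the expected allocation map $\phi^n(\cdot,\cdot)$ is continuous in the population distribution \emph{uniformly in $n$}, and to push the empirical-distribution fluctuations through that modulus of continuity. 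No such continuity is available. The Azevedo--Budish setting allows an arbitrary measurable semi-anonymous mechanism, and envy-freeness implies nothing about continuity in the type distribution. Indeed, the very mechanisms this paper applies the theorem to --- constrained EG allocations --- are generically discontinuous in the empirical type profile, since allocations jump as supply, budget, or fairness constraints switch between binding and slack. A proof routed through uniform continuity would therefore establish only a weaker theorem, with an added hypothesis that fails in the intended applications.

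The original argument avoids continuity entirely by comparing \emph{laws of market compositions} rather than realizations. By semi-anonymity, the bundle of an agent reporting $t_i'$ is a measurable (but otherwise arbitrary) function of the empirical distribution of all $n$ reports. The deviator reporting $t_i'$ faces a composition drawn as one atom at $t_i'$ plus $n-1$ iid draws from $\mu$; a truthful reporter of $t_i'$, in the market that also contains agent $i$ reporting $t_i$, faces one atom at $t_i$ plus $n-1$ iid draws from $\mu$ (conditioned on such a reporter existing, an event whose complement has exponentially small probability by full support). The key lemma is that the total-variation distance between these two laws on empirical compositions is $O(n^{-1/2+\epsilon})$ --- a purely probabilistic likelihood-ratio/concentration fact about multinomials with a full-support base measure over the \emph{finite} type space $T$. (This also resolves your worry about the measurable-outcome extension: the concentration argument lives on $\Delta T$, which is finite-dimensional regardless of $X_0$; the continuum outcome space only matters for defining expected allocations.) Since $u_{t_i}$ takes values in $[0,1]$, total-variation closeness of the composition laws transfers directly into an $O(n^{-1/2+\epsilon})$ bound on expected utilities through \emph{any} measurable mechanism, with no modulus of continuity required. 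Combining this with your (correct) realization-wise envy-freeness comparison yields the stated bound. The missing idea, in one sentence: replace ``closeness of realized empirical distributions plus continuity of the mechanism'' with ``closeness in total variation of the distributions over market compositions,'' which is what makes the theorem true for discontinuous mechanisms.
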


Now we consider the following mechanism setting. Each agent reports a type $(v_i, c)$ specifying a homogeneous, concave, and continuous valuation function $v_i$ and a binary variable $c$ denoting whether they are constrained or not (where they cannot lie about the constrained part).
All agents are assumed to have budget $B_i=1$, and all items are assumed to have supply $n / |T|$ (that is, supply grows linearly in the market size; the $1/|T|$ factor is WLOG. and for convenience).
Then the mechanism computes an EG solution with an AEF constraint $\sum_{i\in \mathcal C}\sum_{j\in P} x_{ij}  \geq L(n, |\mathcal C|)$ (here we parameterize the floor $L$ since we generally want the floor constrain to depend on the market size or the number of constrained buyers).
Each agent then receives the corresponding allocation $x_i$ and pays $1-\delta_i$. If multiple agents reported the same type, then from the perspective of EG-AEF we treat them as a single representative buyer $i$ with budget equal to the number of agents that reported the corresponding type, and split the allocation $x_i$ proportionally.
We call this the \emph{EG-AEF mechanism}.

Assume that there is a finite set $T$ of possible valuation functions $v$. Clearly the outcome space $\mathbb R_+^m$ is measurable, and the feasible set of allocation for market size $n$ is the set of supply-feasible allocations and leftover budgets. Moreover, the mechanism is also obviously semi-anonymous: the EG-AEF program is invariant to permutation within the constrained and unconstrained buyer groups.
Now, let $\mu\in \bar\Delta T$ be the full-support distribution of type reports, potentially resulting from the underlying distribution over types composed with the strategy used by each type.
In terms of the Azevedo-Budish setup, we have that $X_0 = \mathbb R^{m+1}_+$ is the set of allocations vectors and leftover budget pairs $(x_i,\delta_i)$ for a given buyer $i$.

\citet{azevedo2018strategy} assume that for each utility function $u_{t_i}$, $u_{t_i}(x) \in [0,1]$ for all $x\in X_0$. But this clearly cannot hold when $X_0=\mathbb R_+^{m+1}$ and our valuation functions $v$ are homogeneous, since if $v_i(x_i)>0$ for any $(x_i,\delta_i)\in X_0$, then by homogeneity we can make $v_i(\alpha x_i)$ arbitrarily large by choosing a sufficiently-large $\alpha$, thus violating the utility lying in $[0,1]$.
To alleviate this fact, consider a modified EG-AEF mechanism: we pick some large cutoff $k$ such that $\mu_{t_i} > 1/k$, and then if the allocation under the reported type vector results in an allocation $x$ such that for some agent $i$ and item $j$ we have $x_{ij} > k$ (after splitting proportionally among all agents that reported this type), then we throw away all the items and set $x_{ij} = 0, \delta = 0$ for all $i,j$.
This modified EG-AEF mechanism is clearly still envy free: when we use an EG-AEF allocation we know that there is no envy within each group, and there's clearly no envy if no agent receives any allocation whatsoever.
It follows that for any full-support type distribution $\mu$, the modified EG-AEF mechanism satisfies the conditions of \cref{thm:azevedo budish} and is thus SPL.

Now we can use this result to show that EG-AEF itself is SPL as a mechanism.
\begin{theorem}
  The EG-AEF mechanism is SPL. Moreover, consider any distribution $\mu$ over types $(v_i, c)$ with full support and an $\epsilon > 0$. Then there exists $C> 0$ such that for all constrained or unconstrained buyers, true valuation functions $v$, alternative valuation function reports $v'$,  and market sizes $n$, the gain from reporting $v'$ rather than $v$ is bounded above by $C\cdot n^{-1/2 + \epsilon}$.
\end{theorem}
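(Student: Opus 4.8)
The plan is to transfer SPL from the \emph{modified} (cutoff) EG-AEF mechanism to the original EG-AEF mechanism by showing that, for a fixed full-support type distribution, the two mechanisms induce the same expected allocation in the large-market limit. We have already argued that the modified mechanism is envy-free within each group and hence satisfies the hypotheses of Theorem~\ref{thm:azevedo budish}, so it is SPL with deviation gain at most $C\cdot n^{-1/2+\epsilon}$. First I would fix a full-support $\mu \in \bar\Delta T$ together with the cutoff $k$ satisfying $1/k < \min_t \mu_t$ that defines the modified mechanism. On any realized report vector $t$ for which the cutoff does not bind, the two mechanisms output the identical EG-AEF allocation; hence $\phi^n_{\mathrm{mod}}(t_i,\mu)$ and $\phi^n_{\mathrm{EG\text{-}AEF}}(t_i,\mu)$ can differ only through the event $\mathcal B_n$ that some per-agent allocation exceeds $k$.

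Second, I would bound both the probability of $\mathcal B_n$ and the utility discarded on it. The per-agent allocation of item $j$ to a buyer reporting type $t$ is at most (total supply)$/$(number of reports of $t$) $= (n/|T|)/n_t$, since the aggregated type-$t$ representative can receive at most the full supply $n/|T|$ and this is split proportionally among the $n_t$ agents reporting $t$. Thus the cutoff can bind only when some count satisfies $n_t < n/(|T|k)$, so $\mathcal B_n \subseteq \bigcup_t \{ n_t < n/(|T|k) \}$. Because $\mu_t > 1/k$ forces the threshold $n/(|T|k)$ strictly below the mean $\mu_t n$ by a constant multiplicative factor, a multiplicative Chernoff bound combined with a union bound gives $\Pr(\mathcal B_n) \le |T|\, e^{-c n}$ for some $c>0$. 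On $\mathcal B_n$, homogeneity of $v_i$ and finiteness of $T$ yield the crude bound $u_{t_i}(x_i) \le \max_{t}\|v_t\|_\infty \cdot m n/(|T|\, n_{t_i}) + 1 \le C' n$, using $n_{t_i}\ge 1$. Combining these, the expected utility gap obeys $\bigl|u_{t_i}(\phi^n_{\mathrm{mod}}(t_i,\mu)) - u_{t_i}(\phi^n_{\mathrm{EG\text{-}AEF}}(t_i,\mu))\bigr| \le C' n \cdot |T|\, e^{-c n}$, which decays faster than any polynomial.

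Third, I would assemble the SPL bound for EG-AEF by a triangle-inequality argument. Writing the deviation gain of EG-AEF for a pair $t_i,t_i' \in T_g$ in terms of that of the modified mechanism plus two utility-gap terms of the form just bounded, we obtain a deviation gain of at most $C\cdot n^{-1/2+\epsilon} + o(n^{-1/2+\epsilon})$, which establishes that EG-AEF is SPL with the stated rate. Since $u_{t_i}$ is homogeneous degree one and hence Lipschitz on the bounded set of typical expected allocations, closeness of the two expected allocations translates into closeness of the induced utilities, which is all the triangle inequality needs.

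The main obstacle is the second step. Because valuations are homogeneous of degree one, the utility that the original mechanism delivers but the modified one discards on $\mathcal B_n$ is a priori unbounded, so $\Pr(\mathcal B_n)\to 0$ alone is insufficient. The crux is the supply-over-count bound on per-agent allocations, which caps the discarded utility by a polynomial $C'n$, ensuring that the exponential decay of $\Pr(\mathcal B_n)$ dominates. This is exactly where the defining choice $1/k < \min_t \mu_t$ earns its keep: it places the count threshold $n/(|T|k)$ strictly below every expected type count $\mu_t n$, which is what makes the concentration bound applicable and its decay exponential.
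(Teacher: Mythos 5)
Your proposal is correct and follows essentially the same route as the paper's proof: transfer SPL from the cutoff-modified mechanism (which is SPL via within-group envy-freeness and the Azevedo--Budish theorem), bound the utility discarded on the bad event by $O(n)$ using homogeneity, and show that this event---which requires some type count to fall below a threshold of order $n/k$ while $\mu_t > 1/k$ for every type---has exponentially small probability by a concentration inequality, so the product vanishes faster than the $C\cdot n^{-1/2+\epsilon}$ term. Your minor variations (a multiplicative Chernoff bound with a union bound over all types rather than Hoeffding for the single relevant type, and the tighter count threshold $n/(|T|k)$ reflecting the per-item supply $n/|T|$) do not change the substance of the argument.
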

\begin{proof}
  We already saw above that the modified EG-AEF mechanism is SPL. Now we show that the EG-AEF mechanism and the modified EG-AEF mechanism achieve the same utility in an asymptotic sense.

  Consider a type distribution $\mu$ with full support. Now consider a market size $n$ and a sampled set of buyers $t = (v_1,c_1),\ldots, (v_n,c_n)$. If the sampled set of buyers are such that the resulting allocation satisfies $x_{ij} > k$ for some $i$, then the payoffs differ between EG-AEF and modified EG-AEF.
  Consider some particular type $(v_i,c_i)$ in this scenario: under modified EG-AEF $i$ gets utility zero.
  We can upper bound the utility that $i$ gets under EG-AEF by the utility that they get from receiving \emph{all} the items,  i.e. $x_i = (n/|T|)\vec{1}$ and set $\delta_i=1$, which is $u_i(x_i,\delta_i) = v_i(x_i) + 1 = (n/|T|)v_i(\vec{1}) + 1$ by homogeneity.
  Let $\bar v = \max_{(v,c)\in T} v(\vec{1}) / |T|$, in which case we upper bound this by $n\bar v + 1$.
  So, in the worst case, a given type may gain $n \bar v + 1$ from misreporting in the case where we sample a set of types $t$ such that EG-AEF and modified EG-AEF differ.

  Let $x^*, \delta^*,x^{mod},\delta^{mod}$ be the solutions from EG-AEF and modified EG-AEF, where we suppress the dependence on the reported types and market size.

  Now fix an $\epsilon >0$ and a type distribution with full support $\mu$. It follows that for any type $(v,c)$, they can gain at most 
  \[
    C\cdot n^{-1/2 + \epsilon} + (n\bar v + 1) \textrm{Pr}(x^* \ne x^{mod}|\mu).
  \]
  Thus, in order to show that EG-AEF is SPL, we need to show that $\textrm{Pr}(x^* \ne x^{mod}|\mu)$ gets small at a fast enough rate.
  In fact, we will show that this rate is exponential.

  Suppose an agent with reported type $t_i$ receives more than $k$ units of some item $j$. In that case, we must have that in the corresponding EG-AEF allocation, the representative buyer $i$ received $k$ times the number of agents that reported type $t_i$ of that good, let number of such reports be $\tau$. First, this implies that the supply of the good is at least $k$, so this only happens when $n \geq k$. 
  In the worst case, all $n$ units of item $j$ are allocated to the representative buyer $i$, in which case we need $n / \tau \geq k$. This implies that $\tau \leq n/k$.
  Since the type reports of the other agents are distributed iid. according to $\mu$, this means that we must sample at most $n/k - 1$ reports of type $t_i$ for $x^*\ne x^{mod}$ to occur.
  The expected number of reports of type $t_i$ is $n\cdot \mu_{t_i}$.

  By Hoeffding's inequality, we now have that the probability of sampling $n/k-1$ reports of type $t_i$ or fewer is upper bounded by $2\exp(-(n\cdot \mu_{t_i} - n/k + 1)^2 / n)$.
  Since we chose $k$ such that $\mu_{t_i} > 1/k$, this grows small exponentially fast in $n$.
\end{proof}

The exact same proof goes through for PIFP and PBFP as well, because the result did not rely on any structure in the AEF constraint apart from the fact that for any pair of buyers $i,i'$ from the same group, we have that they do not envy each other.

\section{Additional Examples}

\begin{table*}[t]%
\begin{tiny}
  \centering
  \subfloat{ \begin{tabular}{lcr}
    \toprule
    V &   &   \\
    \midrule
    Buyer & Item $C$ & Item $U$ \\
    $A$ & 2 & 2 \\ 
    $B$ & .1 & 3 \\ 
    \bottomrule \\
  \end{tabular}}%
  \qquad
  \subfloat{ \begin{tabular}{lcrrr}
    \toprule
    $X^*$ &   &  & & \\
    \midrule
    Buyer & Item $C$ & Item $U$ & $B_i-px_i$ & $u_i$\\
    $A$ & 1 & 0 & 0 & 2\\ 
    $B$ & 0 & 1 & 0 & 3 \\ 
    \bottomrule \\
  \end{tabular}}
  \subfloat{ \begin{tabular}{lcrrr}
    \toprule
    $X^{PIFP}$ &   &  & * \\
    \midrule
    Buyer & Item $C$ & Item $U$ & $B_i-px_i$ & $u_i$\\
    $A$ & .5 & .258 & 0 & 1.52 \\ 
    $B$ & .5 & .742 & 0 & 2.28 \\  
    \bottomrule \\
  \end{tabular}}
  \caption{An example of a set of valuations with budgets set to $1$ where imposing PIFP constraints on a subset of items leads to both originally disadvantaged ($B$) and originally advantaged ($A$) buyers being worse off.}%
  \label{example_everyoneworse_pifp}%
  \end{tiny}
\end{table*}

\begin{table*}[t]%
\begin{tiny}
  \centering
  \subfloat{ \begin{tabular}{lcr}
    \toprule
    V &   &   \\
    \midrule
    Buyer & Item $A$ & Item $B$ \\
    C & 2 & 1 \\ 
    U & 1 & 2 \\ 
    \bottomrule \\
      &   &   
  \end{tabular}}%
  \qquad
  \subfloat{ \begin{tabular}{lcrr}
    \toprule
    $X^*$ &   &  &  \\
    \midrule
    Buyer & Item $A$ & Item $B$ & $B_i-px_i$ \\
    C & 1 & 0 & 0 \\ 
    U & 0 & 1 & 0 \\ 
    \bottomrule \\
    Price & 1 & 1 & -
  \end{tabular}}
  \subfloat{ \begin{tabular}{lcrr}
    \toprule
    $X^{PBFP}$ &   &  &  \\
    \midrule
    Buyer & Item $A$ & Item $B$ & $B_i-px_i$ \\
    C & .5 & .5 & 0 \\ 
    U & .5 & .5 & 0 \\  
    \bottomrule \\
    Price & .66 & 1.33 & -
  \end{tabular}}
  \caption{An example of a set of valuations with budgets set to $1$ where imposing PBFP constraints leads to both parity constrained (C) and unconstrained (U) buyers being worse off.}%
  \label{example_everyoneworse_pbfp}%
  \end{tiny}
\end{table*}

\begin{table*}[t]%
\begin{tiny}
  \centering
  \subfloat{ \begin{tabular}{lcr}
    \toprule
    V &   &   \\
    \midrule
    Buyer & Item $A$ & Item $B$ \\
    C & 2 & 2 \\ 
    U & 0 & 2 \\ 
\bottomrule 

  \end{tabular}}%
  \qquad
  \subfloat{ \begin{tabular}{lcrr}
    \toprule
    $X^*$ &   &  &  \\
    \midrule
    Buyer & Item $A$ & Item $B$ & $B_i-px_i$ \\
    C & 1 & 0 & 0 \\ 
    U & 0 & 1 & 0 \\ 
    \bottomrule 
  \end{tabular}}
  \subfloat{ \begin{tabular}{lcrr}
    \toprule
    $X^{PBFP}$ &   &  &  \\
    \midrule
    Buyer & Item $A$ & Item $B$ & $B_i-px_i$ \\
    C & .5 & .5 & 0 \\ 
    U & .5 & .5 & 0 \\  
    \bottomrule 
  \end{tabular}}
  \caption{An example of a set of valuations with budgets set to $1$ where imposing PBFP constraints leads to an outcome that is Pareto suboptimal both in the buyer only case and the buyer-protected item case (here $B$ is the originally disadvantaged group). $X^{PBFP}$ is Pareto dominated in both cases by any allocation which transfers some item $A$ to buyer $C$.}%
  \label{example_pareto_pbfp}%
  \end{tiny}
\end{table*}

\begin{table*}[t]%
\begin{tiny}
  \centering
  \subfloat{ \begin{tabular}{lcr}
    \toprule
    V &   &   \\
    \midrule
    Buyer & Item $C$ & Item $U$ \\
    A & 0 & 2 \\ 
    B & 2 & 0 \\ 
\bottomrule 

  \end{tabular}}%
  \qquad
  \subfloat{ \begin{tabular}{lcrr}
    \toprule
    $X^*$ &   &  &  \\
    \midrule
    Buyer & Item $C$ & Item $U$ & $B_i-px_i$ \\
    A & 1 & 0 & 0 \\ 
    B & 0 & 1 & 0 \\ 
    \bottomrule 
  \end{tabular}}
  \subfloat{ \begin{tabular}{lcrr}
    \toprule
    $X^{AEF}$ &   &  &  \\
    \midrule
    Buyer & Item $C$ & Item $U$ & $B_i-px_i$ \\
    A & .5 & 1 & 0 \\ 
    B & .5 & 0 & 0 \\  
    \bottomrule 
  \end{tabular}}
  \caption{An example of a set of valuations with budgets set to $1$ where imposing AEF constraints of $x_{AC} \geq .5$ leads to buyer Pareto suboptimal outcomes.}%
  \label{example_pareto_aef}%
  \end{tiny}
\end{table*}

\end{document}